\theoremstyle{plain}
\newtheorem{theorem}{Theorem}
\newtheorem*{thm1}{Theorem 1}
\newtheorem*{thm2}{Theorem 2}
\newtheorem{lemma}[theorem]{Lemma}
\theoremstyle{definition}
\newtheorem{definition}[theorem]{Definition}
\theoremstyle{remark}
\newtheorem{remark}[theorem]{Remark}
\setlist[itemize]{label=--}
\setlist[enumerate]{label=(\arabic*),labelindent=\parindent,leftmargin=*}
\DeclarePairedDelimiter\braces{\{}{\}}
\NewDocumentCommand\set{O{}mg}{\ensuremath{\braces[#1]{#2\IfNoValueTF{#3}{}{\,:\,#3}}}}
\DeclareMathOperator{\dist}{dist}
\DeclareMathOperator{\dom}{dom}
\DeclareMathOperator{\vset}{set}
\DeclareMathOperator{\vmset}{multiset}
\newcommand{\N}{\mathbb{N}}
\newcommand{\Np}{\mathbb{N_+}}
\newcommand{\F}{\mathcal{F}}
\newcommand{\w}{\overline{v}}
\newcommand{\ww}{\overline{u}}
\newcommand{\nbsmsv}[1]{\hspace{4pt}\underline{\hspace{-1.5pt}\leftrightarrow\hspace{-1.5pt}}\hspace{1pt}_{#1}^\aSV\hspace{2pt}}
\newcommand{\aA}{\mathcal{A}}
\newcommand{\sA}{\mathbf{A}}
\newcommand{\aMV}{\mathcal{MV}}
\newcommand{\aSV}{\mathcal{SV}}
\newcommand{\sMV}{\mathbf{MV}}
\newcommand{\sSV}{\mathbf{SV}}
\newcommand{\sMB}{\mathbf{MB}}
\newcommand{\VVc}{\mathsf{VV_c}}
\newcommand{\VV}{\mathsf{VV}}
\newcommand{\MV}{\mathsf{MV}}
\newcommand{\SV}{\mathsf{SV}}
\newcommand{\VB}{\mathsf{VB}}
\newcommand{\MB}{\mathsf{MB}}
\newcommand{\SB}{\mathsf{SB}}
\newcommand{\MVl}{\MV(1)}
\newcommand{\SVl}{\SV(1)}
\newcommand{\cB}{\mathsf{B}}
\newcommand{\cW}{\mathsf{W}}
\newcommand{\cG}{\mathsf{G}}
\begin{document}

\title{\textbf{\boldmath Ability to Count Is Worth $\Theta(\Delta)$ Rounds}}
\author{Tuomo Lempiäinen \bigskip\\
  \small{Helsinki Institute for Information Technology HIIT \& Department of Computer Science,} \\
  \small{Aalto University, Finland}}
\date{}
\maketitle

\begin{abstract}
  Hella et al.\ (PODC 2012, Distributed Computing 2015) identified seven different models of distributed computing---one of which is the port-numbering model---and provided a complete classification of their computational power relative to each other. However, one of their simulation results involves an additive overhead of $2\Delta-2$ communication rounds, and it was not clear, if this is actually optimal. In this paper we give a positive answer: there is a matching linear-in-$\Delta$ lower bound. This closes the final gap in our understanding of the models, with respect to the number of communication rounds.
\end{abstract}

\section{Introduction}

This work studies the significance of being able to count the multiplicities of identical incoming messages in distributed algorithms. We compare two models: one, in which each node receives a \emph{set} of messages in each round, and another, in which each node receives a \emph{multiset} of messages in each round. It has been previously shown that the latter model can be simulated in the former model by allowing an additive overhead of linear in $\Delta$ communication rounds, where $\Delta$ is the maximum degree of the graph~\cite{hella15weak-models}. In this work we show that this is optimal: in some cases, linear in $\Delta$ extra rounds are strictly necessary.

\subsection{A Hierarchy of Weak Models}\label{sec:hierarchy}

The models that we study are weaker variants of the well-known \emph{port-numbering model}. \Textcite{hella15weak-models} defined a collection of seven models, one of which is the port-numbering model. We denote by $\VVc$ the class of all \emph{graph problems} that can be solved in this model. The following subclasses of $\VVc$ correspond to the weaker variants:
\begin{itemize}[noitemsep,leftmargin=4em]
  \item[$\VV$:] Input and output ports connected to the same neighbour do not necessarily have the same number.
  \item[$\MV$:] Input ports are not numbered; nodes receive a multiset of messages.
  \item[$\SV$:] Input ports are not numbered; nodes receive a set of messages.
  \item[$\VB$:] Output ports are not numbered; nodes broadcast the same message to all neighbours.
  \item[$\MB$:] Combination of $\MV$ and $\VB$.
  \item[$\SB$:] Combination of $\SV$ and $\VB$.
\end{itemize}

There are some trivial containment relations between the classes, such as $\SV \subseteq \MV \subseteq \VV \subseteq \VVc$. The trivial relations are depicted in Figure~\ref{fig:hierarchy1}. However, some classes, such as $\VB$ and $\SV$, are seemingly orthogonal. Somewhat surprisingly, \textcite{hella15weak-models} were able to show that the classes form a linear order:
\[
  \SB \subsetneq \MB = \VB \subsetneq \SV = \MV = \VV \subsetneq \VVc.
\]
For each class, we can also define the subclass of problems solvable in constant time independent on the size of the input graph. The same containment relations hold for the constant-time versions of the classes. The relations are depicted in Figure~\ref{fig:hierarchy2}.

The equalities between classes are proved by showing that algorithms corresponding to a seemingly more powerful class can be simulated by algorithms corresponding to a seemingly weaker class. In the case of $\SV = \MV$, there is an overhead involved, whereas the rest of the simulation results do not increase the running time.

\begin{figure}
  \centering
  \subfloat[]{\label{fig:hierarchy1}\includegraphics[page=5]{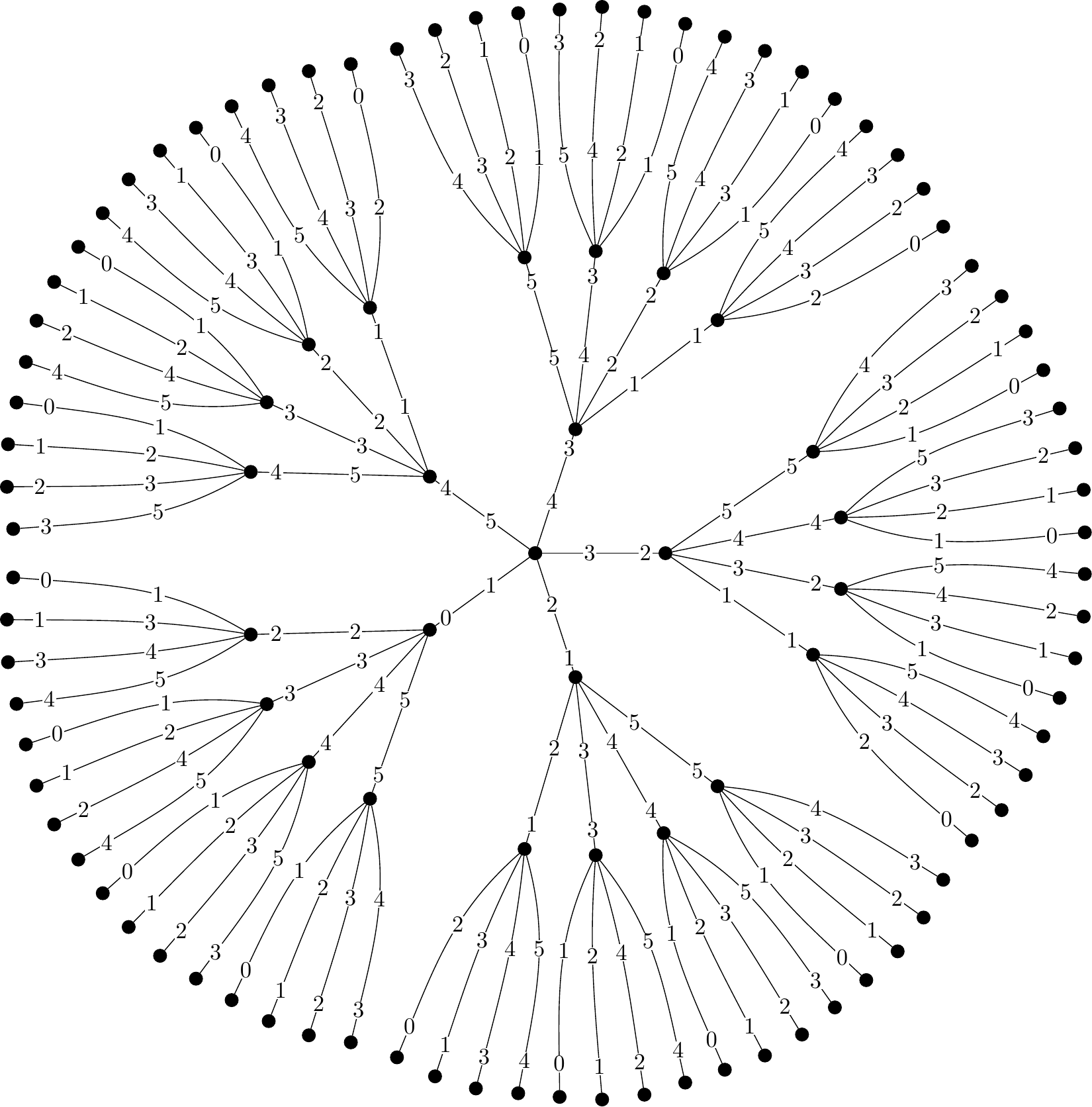}}
  \hspace{4em}
  \subfloat[]{\label{fig:hierarchy2}\includegraphics[page=6]{figures}}
  \caption{\protect\subref{fig:hierarchy1} Trivial containment relations between the problem classes. \protect\subref{fig:hierarchy2} The linear order obtained by \textcite{hella15weak-models}.}
  \label{fig:hierarchy}
\end{figure}

\subsection{Classes \texorpdfstring{$\SV$}{SV} and \texorpdfstring{$\MV$}{MV}}

In this work we study further the relationship between the models that are related to the classes $\SV$ and $\MV$. Neither of the models features incoming port numbers. The only difference is that in the case of $\MV$, algorithms are able to count the number of neighbours that sent any particular message, while in $\SV$ this is not possible. For now we will use informally the terms \emph{$\SV$-algorithm} and \emph{$\MV$-algorithm}; a more formal definition will follow in Section~\ref{sec:prelim}.

\Textcite{hella15weak-models} proved that any $\MV$-algorithm can be simulated by an $\SV$-algorithm, given that the simulating algorithm is allowed to use $2\Delta-2$ extra communication rounds. The basic idea is that when nodes gather all available information from their radius-($2\Delta-2$) neighbourhood, the outgoing port numbers necessarily break symmetry. Any neighbours $u$ and $w$ of a node~$v$ either have different outgoing port numbers towards $v$ or are in a different internal state. This symmetry breaking information can then be used during the simulation to receive a distinct message from each neighbour.

\subsection{Contributions}

This work gives tight lower bounds for simulating $\MV$-algorithms by $\SV$-algorithms. We will prove two theorems. The first theorem is about a so-called simulation problem, that is, breaking symmetry between incoming messages. It is intended to be an exact counterpart to the upper bound result given by the simulation algorithm of \textcite{hella15weak-models}.

\begin{theorem}\label{thm:simulation}
  For each $\Delta \ge 2$ there is a port-numbered graph of maximum degree~$\Delta$ with nodes $v,u,w$, such that when executing any $\SV$-algorithm in the graph, node~$v$ receives identical messages from its neighbours $u$ and $w$ in rounds $1,2,\dotsc,2\Delta-2$.
\end{theorem}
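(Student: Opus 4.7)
My approach is to reduce the theorem to a combinatorial statement about port-numbered graphs and then to construct an explicit graph realising it. For a vertex $x$ in a port-numbered graph, define its $\SV$-view of radius $t$ recursively by $V_0(x) = \deg(x)$ and
\[
  V_t(x) = \bigl(\deg(x),\ \{\,(V_{t-1}(y),\,p_{y,x}) : y \in N(x)\,\}\bigr),
\]
where $p_{y,x}$ is the port at $y$ that leads to $x$ and the outer braces denote a \emph{set} of pairs (with multiplicities collapsed), reflecting the $\SV$-semantics that a node receives a set of messages each round. A routine induction on $t$ shows that in any $\SV$-algorithm the state of $x$ after round $t$ is determined by $V_t(x)$, and hence the message that $x$ sends on a port $p$ in round $t+1$ is determined by $V_t(x)$ and $p$. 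The theorem therefore reduces to the following: for each $\Delta \ge 2$, one can find a port-numbered graph of maximum degree $\Delta$ with designated vertices $v,u,w$ such that $\{v,u\}$ and $\{v,w\}$ are edges, $p_{u,v}=p_{w,v}$, and $V_s(u)=V_s(w)$ for every $s=0,1,\dots,2\Delta-3$.

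I will construct such a graph explicitly. The base case $\Delta = 2$ is already instructive: in the $5$-cycle $v-u-a-b-w-v$, equipped with a port numbering at which $u$'s neighbour labels $\{(\deg y,\, p_{y,u}) : y\in N(u)\}$ equal $\{(2,1),(2,2)\}$ and likewise for $w$, one directly checks that $V_0(u)=V_0(w)$ and $V_1(u)=V_1(w)$, covering both of the $2\Delta-2=2$ required rounds. For general $\Delta$, my plan is to generalise this five-cycle by building a path-like spine of length $\Theta(\Delta)$ from $u$ to $w$ passing through $v$, and by thickening the non-$v$ spine vertices symmetrically so they reach degree $\Delta$. At each level, the port numbering will be chosen so that the unavoidable port asymmetry at $v$ (namely $p_{v,u} \ne p_{v,w}$) is cancelled by a compensating neighbour on each side whose state and outgoing port to $u$ (respectively $w$) contribute exactly the label that would otherwise be missing, mirroring the five-cycle mechanism. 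One spine level, together with its thickening, buys one additional round of view-equality; adding $2\Delta-4$ levels beyond the base case supplies the remaining rounds up to $2\Delta-2$.

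The main obstacle will be to verify, at every depth $s \le 2\Delta-3$, the identity $V_s(u)=V_s(w)$ for the constructed graph. Because the view is set-valued and $V_s(x)$ does not itself contain $V_{s-1}(x)$, the inductive step cannot be a simple structural induction on $(u,w)$ alone: one must track view-equalities simultaneously for all "symmetric" pairs of vertices arising in the construction, so that at each new depth every label $(V_{s-1}(y),\,p_{y,u})$ on $u$'s side is witnessed by a label $(V_{s-1}(z),\,p_{z,w})$ on $w$'s side, and vice versa. I will carry this out by induction on $s$, maintaining as the invariant that all designated symmetric pairs still have equal views at level $s-1$, and exhibiting for each newly introduced pair of labels an explicit matching. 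The count $2\Delta-2$ arises because each additional round of view-equality consumes one of the $\Delta-1$ non-$v$ port slots available at each of $u$ and $w$, giving a total budget of $2(\Delta-1)=2\Delta-2$ rounds before the ports at $v$ are forced to reveal the asymmetry; with the construction and verification in hand, the conclusion is immediate from the reduction.
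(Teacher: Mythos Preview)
Your reduction to $\SV$-views is correct and is essentially the bisimulation framework the paper sets up (Definition~\ref{def:nbisimsv}, Lemma~\ref{lem:nbsmsvstate}). Your $5$-cycle for $\Delta=2$ also works and is in fact simpler than the paper's tree $G_2$.

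The genuine gap is that you never give a construction for general $\Delta$. The phrases ``path-like spine'', ``thickening symmetrically'', and ``compensating neighbour'' describe an intention, not a graph, and the heuristic ``each round consumes one of the $\Delta-1$ port slots at $u$ and $w$'' is not an argument. The difficulty is precisely that a compensating neighbour $a$ of $u$ must, at every depth $s-1$, have the same $\SV$-view as whatever vertex it is standing in for---often $v$ itself---and $v$'s view already involves both $u$ and $w$ and all of \emph{their} neighbours. Unwinding this forces the compensators to have compensators of their own, and one is driven to a tree of branching factor about $\Delta$ and depth about $2\Delta$, not a path with bounded decorations. Your budget count happens to give the right order of magnitude but does not reflect the mechanism that makes a construction work, and nothing in the plan indicates how to choose the port numbers so that the cascading compensation closes up.

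For comparison, the paper's construction is exactly such a tree: $G_d$ has depth $2d$ and branching factor $d-1$, with port numbers assigned by a greedy rule (rules (G1)--(G4)). The proof that the two designated children $((1,0))$ and $((2,1))$ of the root are $(2d-3)$-$\aSV$-bisimilar is not a direct induction on depth of the kind you sketch; it goes through an auxiliary notion of \emph{pairs of separating walks} (PSWs): pairs of walks from $((1,0))$ and $((2,1))$, matched step-by-step by outgoing port, that eventually reach a point where the matching cannot be continued. The core lemma shows that a critical PSW in $G_d$, after deleting its last two steps, is already a PSW in $G_{d-1}$; induction on $d$ then gives the length bound $2d-3$. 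The factor $2$ in $2\Delta-2$ comes from the fact that port numbers along a critical PSW can grow by at most one every \emph{two} steps (alternating odd/even levels), not from a port-slot budget at $u$ and $w$. So while your framing is sound, the substantive work---the explicit graph and the inductive invariant that actually verifies view-equality---is still entirely ahead of you.
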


Our second theorem gives a \emph{graph problem} that separates $\MV$-algorithms from $\SV$-algorithms with respect to running time as a function of the maximum degree~$\Delta$.

\begin{theorem}\label{thm:parity}
  There is a graph problem that can be solved in one communication round by an $\MV$-algorithm, but that requires at least $\Delta$ rounds for all $\Delta \ge 2$, when solved by an $\SV$-algorithm.
\end{theorem}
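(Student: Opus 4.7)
My proposal is to take $\Pi$ to be the \emph{neighbourhood parity} problem: each node~$v$ carries a Boolean input $x(v)$ and must output $y(v)=\bigoplus_{u\in N(v)} x(u)$. The upper bound is immediate: the $\MV$-algorithm has every node broadcast its input bit in round~$1$, after which $v$ receives the multiset of neighbour inputs and XORs it, which respects multiplicity modulo~$2$. So $\Pi$ lies in $\MVl$.

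For the lower bound, I would, for each $\Delta\ge2$, exhibit two input-labelled port-numbered graphs $(G,f_0)$ and $(G,f_1)$ of maximum degree~$\Delta$ sharing a distinguished node~$v$ such that (a)~$y_{f_0}(v)\neq y_{f_1}(v)$, and (b)~the $\SV$-views of~$v$ through round $\Delta-1$ are isomorphic under $f_0$ and $f_1$. Conditions (a) and (b) together prevent any $\SV$-algorithm with running time $<\Delta$ from answering correctly on both instances. To force~(a), I plan to flip the input at a single neighbour of~$v$, which toggles the XOR. To arrange~(b), I would build $G$ from the graph of Theorem~\ref{thm:simulation}, in which two neighbours $u,w$ of~$v$ are $\SV$-indistinguishable from~$v$'s perspective for $2\Delta-2\ge\Delta$ rounds; placing the flipped bit on~$u$ and choosing $x(w)$ to equal the new value of $x(u)$ masks the flip in $v$'s round-$1$ set view.

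The hard part will be showing that this masking survives all of rounds $2,\dots,\Delta-1$: once the two inputs differ, $u$ and $w$ are genuinely different nodes, so the asymmetry can in principle percolate through their local views and eventually alter the set of messages reaching~$v$. I plan to handle this inductively, by an $\SV$-bisimulation tailored to the Theorem~\ref{thm:simulation} gadget, absorbing one round of propagation per layer so that the disturbance cannot reach~$v$ before round~$\Delta$. The delicate technical step is checking compatibility with the set (rather than multiset) semantics at each intermediate round --- two nodes can have non-isomorphic $\MV$-views because of message multiplicities while still sharing the same $\SV$-view, and it is precisely this set-vs.-multiset collapse, already exploited in Theorem~\ref{thm:simulation}, that I expect to leverage to obtain the $\Delta$-round bound. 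If the direct adaptation does not yield exactly $\Delta$ rounds, a fallback is to augment $G$ with an additional symmetric dummy neighbour of~$v$ carrying fixed input bits, which costs one unit of degree but can sharpen the indistinguishability radius to the desired value while keeping the maximum degree at~$\Delta$.
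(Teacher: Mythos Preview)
Your upper bound is fine, and the parity problem is a reasonable choice, but the lower-bound plan has a concrete gap already at round~$1$.  You want $(G,f_0,v)\nbsmsv{\Delta-1}(G,f_1,v)$ where $f_0$ and $f_1$ differ only at the neighbour~$u$.  Recall that $r$-$\aSV$-bisimilarity at~$v$ requires, for every neighbour~$w$ of~$v$ in one instance, a neighbour~$w'$ of~$v$ in the other instance that is $(r-1)$-bisimilar to~$w$ \emph{and} has the same outgoing port number towards~$v$; $(r-1)$-bisimilarity in turn forces $f(w)=f(w')$.  In the Theorem~\ref{thm:simulation} graph only $u$ and~$w$ use outgoing port~$1$ towards~$v$.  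Whichever fixed value you assign to~$x(w)$, after flipping~$x(u)$ exactly one of the two instances has a port-$1$ neighbour carrying input~$0$ and the other does not (or symmetrically for input~$1$).  Hence condition~(B3) fails for $r=1$: there is no valid match for~$u$ in the instance where its bit was flipped.  Concretely, an $\aSV$-algorithm that sends the pair $(\text{local input},\text{outgoing port})$ in round~$1$ makes~$v$ receive different \emph{sets} in $(G,f_0)$ and $(G,f_1)$.  So the claim that the flip is ``masked in $v$'s round-$1$ set view'' is false, and since the obstruction is at depth~$1$ no amount of later-round induction repairs it.  The bisimilarity $u\nbsmsv{2\Delta-3}w$ from Theorem~\ref{thm:simulation} is of no help here, because it was established for equal inputs; as soon as $f(u)\ne f(w)$ the two nodes are not even $0$-bisimilar.

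What the argument actually needs---and what the paper supplies---is a construction in which the ``extra'' neighbour of~$v$ that tips the answer has, in the \emph{other} instance, a bisimilar twin with the \emph{same} input and the \emph{same} outgoing port.  The paper achieves this by doubling: $v$ gets $2d-1$ neighbours, namely a $\cB$-copy and a $\cW$-copy of every branch of~$G_d$ except the $((1,0))$ branch, which appears only in the majority colour.  The lone branch $((1,0,C))$ is then absorbed, via the Theorem~\ref{thm:simulation} bisimilarity, by $((2,1,C))$, which is present with that colour in \emph{both} instances.  Your fallback of adding one dummy neighbour does not suffice as stated: a single extra port-$1$ neighbour with a fixed bit can cover the missing input value on one side of the flip but not the other, and in any case you would still need it to be $(\Delta-2)$-bisimilar to~$u$ with the appropriate input labelling throughout its subtree---which amounts to rebuilding the doubled construction rather than quoting Theorem~\ref{thm:simulation} as a black box.
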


Our results are based on constructing a family of graphs with an intricate port numbering of certain kind. We start by proving Theorem~\ref{thm:simulation} in Section~\ref{sec:simlb}, and then we adapt the same construction to prove Theorem~\ref{thm:parity} in Section~\ref{sec:graphlb}.

\subsection{Motivation and Related Work}

The port-numbering model, or $\VVc$, can be thought to model wired networks, whereas model~$\SB$ corresponds to fully wireless systems. Other models in the hierarchy are intermediate steps between the two extreme cases.

Models similar to $\MV$ have been studied previously under various names:
output port awareness~\cite{boldi96symmetry},
wireless in input~\cite{boldi97computing},
mailbox~\cite{boldi97computing},
port-to-mailbox~\cite{yamashita99leader} and
port-\`a-bo\^\i te~\cite{chalopin06phd}.
However, most of the previous research does not give general results about graph problems, but instead focuses on individual problems or makes different assumptions about the model. To the best of our knowledge, model~$\SV$ has not been studied before the work of \textcite{hella15weak-models}.

\Textcite{emek13stone} have considered networks of nodes with very limited computation and communication capabilities---in particular, the nodes can count identical messages only up to some predetermined number. They argue that models like that will be crucial when applying distributed computing to networks of biological cells.

Our models have analogies also in graph exploration.
Models $\SV$ and $\MV$ correspond to the case where an agent does not know from which edge it arrived to a node. This is true for \emph{traversal sequences}~\cite{aleliunas79random}, as opposed to \emph{exploration sequences}~\cite{koucky02universal}. If we have several agents exploring a graph, the question of whether they can count the number of identical agents in a node becomes interesting. Our lower bounds indicate that, with appropriate definitions, this ability causes a difference of linear in $\Delta$ steps in certain traversal sequences.

\Textcite{hella15weak-models} identified a connection between the seven models of computation and certain variants of modal logic, in the spirit of descriptive complexity theory. In certain classes of structures, \emph{graded multimodal logic} corresponds to $\MV$ and \emph{multimodal logic} corresponds to $\SV$. Thus our lower-bound result can be recast in terms of modal formulas: when given a formula~$\phi$ of graded multimodal logic, we can find an equivalent formula~$\psi$ of multimodal logic, but in general, the modal depth $\operatorname{md}(\psi)$ of $\psi$ has to be at least $\operatorname{md}(\phi)+\Delta-1$. For details on modal logic, see \textcite{blackburn01modal} or \textcite{blackburn07handbook}.

\section{Preliminaries}\label{sec:prelim}

In this section we define the models of computation and the problems we study, as well as introduce tools that will be needed in order to prove our results.

\subsection{Distributed Algorithms}\label{sec:distalg}

We define distributed algorithms as state machines. They are executed in a graph such that each node of the graph is a copy of the same state machine. Nodes can communicate with adjacent nodes. In this work, we consider only deterministic state machines and synchronous communication in anonymous networks.

In the beginning of execution, each state machine is initialised based on the degree of the node and a possible local input given to it. Then, in each communication round, each state machine performs three operations:
\begin{enumerate}[noitemsep]
  \item sends a message to each neighbour,
  \item receives a message from each neighbour,
  \item moves to a new state based on the current state and the received messages.
\end{enumerate}
If the new state is a special stopping state, the machine halts. The local output of the node is its state after halting. Next, we will define distributed systems more formally.

\subsubsection{Inputs and Port Numberings}

Consider a graph~$G=(V,E)$. An \emph{input} for $G$ is a function $f \colon V \to X$, where $X$ is a finite set such that $\emptyset \in X$. For each $v \in V$, the value $f(v)$ is called the \emph{local input} of $v$.

A \emph{port} of $G$ is a pair $(v,i)$, where $v \in V$ is a node and $i \in [\deg(v)]$ is the number of the port. Let $P(G)$ be the set of all ports of $G$. A \emph{port numbering} of $G$ is a bijection $p \colon P(G) \to P(G)$ such that
\[
  p(v,i) = (u,j) \text{ for some $i$ and $j$} \quad \text{if and only if} \quad \set{v,u} \in E.
\]
Intuitively, if $p(v,i) = (u,j)$, then $(v,i)$ is an output port of node~$v$ that is connected to an input port $(u,j)$ of node~$u$.

When analysing lower-bound constructions, we will find the following generalisation of port numbers useful. Let $N$ be an arbitrary set. Assume that for each $v \in V$, $I_v \subseteq N$ and $O_v \subseteq N$ are subsets of size~$\deg(v)$. Now, a \emph{generalised input port} is a pair $(v,i)$, where $v \in V$ and $i \in I_v$, and a \emph{generalised output port} is a pair $(v,o)$, where $v \in V$ and $o \in O_v$. A \emph{generalised port numbering}~$p$ is then a bijection that maps each output port to an input port of an adjacent node.

\subsubsection{State Machines}\label{sec:statemach}

For each positive integer~$\Delta$, denote by $\F(\Delta)$ the class of all simple undirected graphs of maximum degree at most~$\Delta$. Let $X \ni \emptyset$ be a finite set of local inputs. A \emph{distributed state machine} for $(\F(\Delta),X)$ is a tuple $\aA = (Y,Z,\sigma_0,M,\mu,\sigma)$, where
\begin{itemize}[noitemsep]
  \item $Y$ is a set of states,
  \item $Z \subseteq Y$ is a finite set of stopping states,
  \item $\sigma_0 \colon \set{0,1,\dotsc,\Delta} \times X \to Y$ is a function that defines the initial state,
  \item $M$ is a set of messages such that $\epsilon \in M$,
  \item $\mu \colon Y \times [\Delta] \to M$ is a function that constructs the outgoing messages, such that $\mu(z,i) = \epsilon$ for all $z \in Z$ and $i \in [\Delta]$,
  \item $\sigma \colon Y \times M^\Delta \to Y$ is a function that defines the state transitions, such that $\sigma(z,\overline{m}) = z$ for all $z \in Z$ and $\overline{m} \in M^\Delta$.
\end{itemize}
The special symbol $\epsilon \in M$ indicates ``no message'' and $\emptyset$ indicates ``no input''.

\subsubsection{Executions}

Let $G = (V,E) \in \F(\Delta)$ be a graph, let $p$ be a port numbering of $G$, let $f \colon V \to X$ be an input for $G$, and let $\aA$ be a distributed state machine for $(\F(\Delta),X)$. Then we can define the \emph{execution} of $\aA$ in $(G,f,p)$ as follows.

The state of the system in round~$r \in \N$ is represented as a function $x_r \colon V \to Y$, where $x_r(v)$ is the \emph{state} of node~$v$ in round~$r$. To initialise the nodes, set
\[
  x_0(v) = \sigma_0(\deg(v),f(v)) \quad \text{for each } v \in V.
\]
Then, assume that $x_r$ is defined for some $r \in \N$. Let $(u,j) \in P(G)$ and $(v,i) = p(u,j)$. Now, node~$v$ receives the message
\[
  a_{r+1}(v,i) = \mu(x_r(u),j)
\]
from its port $(v,i)$ in round~$r+1$. For each $v \in V$, we define a vector of length~$\Delta$ consisting of messages received by node~$v$ in round~$r+1$ and the symbol~$\epsilon$:
\[
  \overline{a}_{r+1}(v) = (a_{r+1}(v,1),a_{r+1}(v,2),\dotsc,a_{r+1}(v,\deg(v)),\epsilon,\epsilon,\dotsc,\epsilon),
\]
where the padding with the special symbol~$\epsilon$ is to simplify our notation so that $\overline{a}_{r+1}(v) \in M^\Delta$. Now we can define the new state of each node~$v \in V$ as follows:
\[
  x_{r+1}(v) = \sigma(x_r(v),\overline{a}_{r+1}(v)).
\]
Let $t \in \N$. If $x_t(v) \in Z$ for all $v \in V$, we say that $\aA$ \emph{stops in time~t} in $(G,f,p)$. The \emph{running time} of $\aA$ in $(G,f,p)$ is the smallest~$t$ for which this holds. If $\aA$ stops in time~$t$ in $(G,f,p)$, the \emph{output} of $\aA$ in $(G,f,p)$ is $x_t \colon V \to Y$. For each $v \in V$, the \emph{local output} of $v$ is $x_t(v)$.

We define the \emph{execution} of $\aA$ in $(G,p)$ to be the execution of $\aA$ in $(G,f,p)$, where $f$ is the unique function $f \colon V \to \set{\emptyset}$.

\subsubsection{Algorithm Classes}

So far, we have defined only a single model of computation. However, our aim in this work is to investigate the relationships between two variants of the model. To this end, we will now introduce two different restrictions to the definition of a state machine.

Given a vector $\overline{a} = (a_1,a_2,\dotsc,a_\Delta) \in M^\Delta$, define
\begin{align*}
  \vset(\overline{a}) &= \set{a_1,a_2,\dotsc,a_\Delta}, \\
  \vmset(\overline{a}) &= \set{(m,n)}{m \in M, n = |\set{i \in [\Delta]}{m = a_i}|}.
\end{align*}
That is, $\vset(\overline{a})$ discards the ordering and multiplicities of the elements of $\overline{a}$, while $\vmset(\overline{a})$ discards only the ordering.

Now we can define classes $\aSV$ and $\aMV$ of state machines. Class $\aSV$ consists of all distributed state machines $\aA = (Y,Z,\sigma_0,M,\mu,\sigma)$ such that
\[
  \vset(\overline{a}) = \vset(\overline{b}) \quad\text{implies}\quad \sigma(y,\overline{a}) = \sigma(y,\overline{b}) \quad\text{for all}\quad y \in Y.
\]
Similarly, class $\aMV$ consists of all distributed state machines $\aA = (Y,Z,\sigma_0,M,\mu,\sigma)$ such that
\[
  \vmset(\overline{a}) = \vmset(\overline{b}) \quad\text{implies}\quad \sigma(y,\overline{a}) = \sigma(y,\overline{b}) \quad\text{for all}\quad y \in Y.
\]

The idea here is that for state machines in $\aMV$, the state transitions are invariant with respect to the order of incoming messages; in practice, nodes receive the messages in a multiset. In $\aSV$, nodes receive the messages in a set, which means that the state transitions are invariant with respect to both the order and multiplicities of incoming messages.

We will later find useful the following definitions for infinite sequences of state machines, where $\Delta$ will be used as an upper bound for the maximum degree of graphs:
\begin{align*}
  \sMV &= \set{(\aA_1,\aA_2,\dots)}{\aA_\Delta \in \aMV \text{ for all } \Delta}, \\
  \sSV &= \set{(\aA_1,\aA_2,\dots)}{\aA_\Delta \in \aSV \text{ for all } \Delta}.
\end{align*}

From now on, both distributed state machines~$\aA$ and sequences of distributed state machines~$\sA$ will be referred to as \emph{algorithms}. The precise meaning should be clear from the notation.

\subsection{Graph Problems}\label{sec:graphprob}

Let $X$ and $Y$ be finite nonempty sets. A \emph{graph problem} is a function~$\Pi_{X,Y}$ that maps each undirected simple graph~$G=(V,E)$ and each input~$f \colon V \to X$ to a set~$\Pi_{X,Y}(G,f)$ of solutions. Each \emph{solution} $S \in \Pi_{X,Y}(G,f)$ is a function~$S \colon V \to Y$. We handle problems without local input by setting $X = \set{\emptyset}$. One can see that our definition covers a large selection of typical distributed computing problems, such as those where the task is to find a subset or colouring of vertices.

Let $\Pi_{X,Y}$ be a graph problem, $T \colon \N \times \N \to \N$ a function and $\sA = (\aA_1,\aA_2,\dotsc)$ a sequence such that each $\aA_\Delta$ is a distributed state machine for $(\F(\Delta),X)$. We define that \emph{$\sA$ solves $\Pi_{X,Y}$ in time~$T$} if the following conditions hold for all $\Delta \in \N$, all finite graphs~$G=(V,E) \in \F(\Delta)$, all inputs $f \colon V \to X$ and all port numberings~$p$ of $G$:
\begin{enumerate}[noitemsep]
  \item $\aA_\Delta$ stops in time $T(\Delta,|V|)$ in $(G,f,p)$.
  \item The output of $\aA_\Delta$ in $(G,f,p)$ is in $\Pi_{X,Y}(G,f)$.
\end{enumerate}

If there exists a function $T \colon \N \times \N \to \N$ such that $\sA$ solves $\Pi_{X,Y}$ in time~$T$, we say that \emph{$\sA$ solves $\Pi_{X,Y}$} or that \emph{$\sA$ is an algorithm for $\Pi_{X,Y}$}. If the value~$T(\Delta,n)$ does not depend on $n$, that is, if we have $T(\Delta,n) = T'(\Delta)$ for some function $T' \colon \N \to \N$, we say that \emph{$\sA$ solves $\Pi_{X,Y}$ in constant time} or that \emph{$\sA$ is a local algorithm for $\Pi_{X,Y}$}.

\begin{remark}
  Local inputs do not add anything essential to our work. Since the set~$X$ of possible input values is uniformly finite, the information given by an input $f \colon V \to X$ could be encoded as topological information in the graph. However, the use of local inputs will make our life easier, when we construct problem instances in Section~\ref{sec:graphlb}.
\end{remark}

\subsubsection{Problem Classes}

Now we are ready to define complexity classes based on our different notions of algorithms. The two classes studied in this work are as follows:
\begin{itemize}[noitemsep]
  \item $\MV$ consists of problems~$\Pi$ such that there is an algorithm $\sA \in \sMV$ that solves $\Pi$.
  \item $\SV$ consists of problems~$\Pi$ such that there is an algorithm $\sA \in \sSV$ that solves $\Pi$.
\end{itemize}
For both classes, we can also define their constant-time variants:
\begin{itemize}[noitemsep]
  \item $\MVl$ consists of problems~$\Pi$ such that there is $\sA \in \sMV$ that solves $\Pi$ in constant time.
  \item $\SVl$ consists of problems~$\Pi$ such that there is $\sA \in \sSV$ that solves $\Pi$ in constant time.
\end{itemize}

Observe that it follows trivially from the definitions of the algorithm classes that $\SV \subseteq \MV$ and $\SVl \subseteq \MVl$. It was shown by \textcite{hella15weak-models} that we actually have $\SV = \MV$ and $\SVl = \MVl$.

\subsection{Bisimulation}\label{sec:bisim}

In this section we introduce a tool that we will need when proving lower-bound results in Sections~\ref{sec:simlb} and \ref{sec:graphlb}. The tool in question is bisimulation, and in particular, its finite approximation, which we call $r$-bisimulation. Simply put, a bisimulation is a relation between two structures such that related elements have identical local information and equivalent relations to other elements. For more details on bisimulation in general, see \textcite{blackburn01modal} or \textcite{blackburn07handbook}.

\textcite{hella15weak-models} demonstrated the use of bisimulation in distributed computing by establishing a connection between the weak models mentioned in Section~\ref{sec:hierarchy} and certain variants of modal logic. Here we take a considerably simpler approach and show directly that bisimilarity implies indistinguishability by distributed algorithms.

The general concept of bisimulation can be adapted to take into account the different amounts of information that is available to algorithms in each model. We will need only one variant in this work, the one corresponding to the class~$\aSV$.

\begin{definition}\label{def:nbisimsv}
  Let $G = (V,E)$ and $G' = (V',E')$ be graphs, let $f$ and $f'$ be inputs for $G$ and $G'$, respectively, and let $p$ and $p'$ be generalised port numberings of $G$ and $G'$, respectively. We define $r$-$\aSV$-bisimilarity recursively. As a base case, we say that nodes $v \in V$ and $v' \in V'$ are \emph{$0$-$\aSV$-bisimilar} if $\deg_G(v) = \deg_{G'}(v')$ and $ f(v) = f'(v')$. For $r \in \Np$, we say that $v \in V$ and $v' \in V'$ are \emph{$r$-$\aSV$-bisimilar} if the following conditions hold:
  \begin{enumerate}[label=(B\arabic*)]
    \item Nodes $v$ and $v'$ are $0$-$\aSV$-bisimilar.
    \item If $\set{v,w} \in E$, then there is $w' \in V'$ with $\set{v',w'} \in E'$ such that $w$ and $w'$ are $(r-1)$-$\aSV$-bisimilar, and $p(w,a) = (v,b)$ and $p'(w',a) = (v',c)$ hold for some $a$, $b$, $c$.
    \item If $\set{v',w'} \in E'$, then there is $w \in V$ with $\set{v,w} \in E$ such that $w$ and $w'$ are $(r-1)$-$\aSV$-bisimilar, and $p(w,a) = (v,b)$ and $p'(w',a) = (v',c)$ hold for some $a$, $b$, $c$.
  \end{enumerate}
  If $v \in V$ and $v' \in V'$ are \emph{$r$-$\aSV$-bisimilar}, we write $(G,f,v,p) \nbsmsv{r} (G',f',v',p')$---or simply $v \nbsmsv{r} v'$, if the graphs, inputs and generalised port numberings are clear from the context.
\end{definition}

It is clear from the definition that if $(G,f,v,p) \nbsmsv{r} (G',f',v',p')$ holds for some $r$, then $(G,f,v,p) \nbsmsv{t} (G',f',v',p')$ holds for all $t = 0,1,\dotsc,r$. As the following lemma shows, $r$-bisimilarity entails indistinguishability by distributed algorithms up to running time~$r$.

\begin{restatable}{lemma}{lemnbsmsvstate}\label{lem:nbsmsvstate}
  Let $G = (V,E)$ and $G' = (V',E')$ be graphs, let $f$ and $f'$ be inputs for $G$ and $G'$, respectively, and let $p$ and $p'$ be port numberings of $G$ and $G'$, respectively. If $(G,f,v,p) \nbsmsv{r} (G',f',v',p')$ for some $r \in \N$, $v \in V$ and $v' \in V'$, then for all algorithms $\aA \in \aSV$ we have $x_t(v) = x'_t(v')$ for all $t = 0,1,\dotsc,r$, that is, the states of $v$ and $v'$ are identical in rounds~$0,1,\dotsc,r$.
\end{restatable}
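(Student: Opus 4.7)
The natural approach is induction on $r$. The induction hypothesis I would carry is: whenever $v$ and $v'$ are $r$-$\aSV$-bisimilar (for any choice of graphs, inputs, and generalised port numberings meeting the hypotheses of the lemma) and $\aA \in \aSV$, the local states satisfy $x_t(v) = x'_t(v')$ for all $t \in \set{0,1,\dotsc,r}$.

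The base case $r=0$ is immediate from the definition of $0$-$\aSV$-bisimilarity together with the formula $x_0(v) = \sigma_0(\deg(v),f(v))$: both degrees and local inputs agree, so the initial states coincide. For the inductive step, assume $r\ge 1$ and that $v \nbsmsv{r} v'$. Since the observation after Definition~\ref{def:nbisimsv} gives $v \nbsmsv{r-1} v'$ as well, the induction hypothesis yields $x_t(v) = x'_t(v')$ for all $t\le r-1$. It then remains to show $x_r(v) = x'_r(v')$, and by the transition rule it suffices to establish $\sigma(x_{r-1}(v),\overline{a}_r(v)) = \sigma(x'_{r-1}(v'),\overline{a}'_r(v'))$. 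The first arguments already agree, so the crux is to compare the incoming-message vectors.

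Here I would use the $\aSV$ assumption to reduce the task to showing $\vset(\overline{a}_r(v)) = \vset(\overline{a}'_r(v'))$. For the left-to-right inclusion, take any $m \in \vset(\overline{a}_r(v))$. If $m = \epsilon$, then because $\deg(v) = \deg(v')$ (from B1), the vector $\overline{a}'_r(v')$ is also padded with at least one $\epsilon$ whenever $\overline{a}_r(v)$ is, so $\epsilon \in \vset(\overline{a}'_r(v'))$; otherwise $m = \mu(x_{r-1}(w),a)$ for some neighbour $w$ of $v$ and some outgoing port number $a$ with $p(w,a) = (v,b)$. Condition~(B2) of $r$-$\aSV$-bisimilarity then supplies a neighbour $w'$ of $v'$ with $p'(w',a) = (v',c)$ (crucially the same $a$) such that $w \nbsmsv{r-1} w'$. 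By the induction hypothesis applied to $w$ and $w'$ at time $r-1$, we have $x_{r-1}(w) = x'_{r-1}(w')$, hence $\mu(x_{r-1}(w),a) = \mu(x'_{r-1}(w'),a)$, giving $m \in \vset(\overline{a}'_r(v'))$. The reverse inclusion is symmetric, using~(B3). Combining this with the $\aSV$ invariance property of $\sigma$ closes the induction.

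The main obstacle is not conceptual but bookkeeping: one has to track that the outgoing port index $a$ at $w$ towards $v$ coincides with the outgoing port index at $w'$ towards $v'$ (this is exactly what conditions~(B2) and~(B3) deliver) so that the message function $\mu(\cdot,a)$ is applied to the same second coordinate on both sides. The remaining subtlety is handling the $\epsilon$-padding in $\overline{a}_r(v)$ and $\overline{a}'_r(v')$, which is resolved simply by the degree equality furnished by (B1). Nowhere do we need the multiset structure, so the argument goes through for the weaker $\aSV$-class, and a single application of the $\aSV$ condition is what converts bisimilarity on individual neighbours into equality of the resulting states.
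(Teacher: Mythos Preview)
Your proposal is correct and follows essentially the same route as the paper's proof: induction on $r$, with the base case handled by $\sigma_0$ and the inductive step established by using (B2)/(B3) to match each neighbour $w$ of $v$ with a neighbour $w'$ of $v'$ sharing the same outgoing port label $a$, applying the induction hypothesis to get $x_{r-1}(w)=x'_{r-1}(w')$ and hence equal messages, and invoking the $\aSV$ set-invariance of $\sigma$ together with degree equality for the $\epsilon$-padding. The only cosmetic difference is that the paper phrases the step as going from $r=s$ to $r=s+1$ rather than from $r-1$ to $r$.
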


\begin{proof}
  We prove the claim by induction on $r$. Let $\aA \in \aSV$ be an arbitrary algorithm.
  The base case $r = 0$ is clear: since $v \nbsmsv{0} v'$, we have
  \[
    x_0(v) = \sigma_0(\deg(v),f(v)) = \sigma_0(\deg(v'),f'(v')) = x'_0(v').
  \]

  Suppose then that the claim holds for $r = s$ and that $v \nbsmsv{s+1} v'$. We obtain immediately by the inductive hypothesis that $x_t(v) = x'_t(v')$ for all $t=0,1,\dotsc,s$. Conditions (B2) and (B3) of Definition~\ref{def:nbisimsv} guarantee that for each neighbour~$u$ of $v$ there is a neighbour~$u'$ of $v'$, and vice versa, such that $u \nbsmsv{s} u'$, and additionally, $p(u,j) = (v,i)$ and $p'(u',j) = (v',i')$ for some $j,i,i'$. For each such pair of neighbours, the inductive hypothesis implies that $x_s(u) = x'_s(u')$. We have now $\mu(x_s(u),j) = \mu(x'_s(u'),j)$ and thus $a_{s+1}(v,i) = a'_{s+1}(v',i')$. That is, for each message $a_{s+1}(v,k)$ in the vector $\overline{a}_{s+1}(v)$ there is an identical message $a'_{s+1}(v',k')$ in $\overline{a}'_{s+1}(v')$, and vice versa. Additionally, as $\deg(v) = \deg(v')$, the special symbol~$\epsilon$ is either in both of the vectors or in neither of them. It follows that $\vset(\overline{a}_{s+1}(v)) = \vset(\overline{a}'_{s+1}(v'))$. Since $\aA \in \aSV$, we have
  \[
    x_{s+1}(v) = \sigma(x_s(v),\overline{a}_{s+1}(v)) = \sigma(x'_s(v'),\overline{a}'_{s+1}(v')) = x'_{s+1}(v').
  \]
  Now $x_t(v) = x'_t(v')$ for all $t=0,1,\dotsc,s+1$, and hence we have shown that the claim holds for $r = s+1$.
\end{proof}

It is quite straightforward to show by induction that $r$-$\aSV$-bisimilarity is an equivalence relation. Since we will only need transitivity in this work, the following lemma suffices.

\begin{restatable}{lemma}{lemnbsmeqrel}\label{lem:nbsmeqrel}
  The $r$-$\aSV$-bisimilarity relation $\nbsmsv{r}$ is transitive in the class of quadruples $(G,f,v,p)$, where $G = (V,E)$ is a graph, $f$ is an input for $G$, $p$ is a generalised port numbering of $G$ and $v \in V$.
\end{restatable}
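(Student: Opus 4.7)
The plan is to prove transitivity by induction on $r$, indexing the three quadruples by $1,2,3$ so that the assumptions read $(G_1,f_1,v_1,p_1) \nbsmsv{r} (G_2,f_2,v_2,p_2)$ and $(G_2,f_2,v_2,p_2) \nbsmsv{r} (G_3,f_3,v_3,p_3)$. For the base case $r = 0$, the degrees and the local inputs agree pairwise, so transitivity of equality immediately gives $(G_1,f_1,v_1,p_1) \nbsmsv{0} (G_3,f_3,v_3,p_3)$.

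For the inductive step, assume the claim for $r-1$. Condition (B1) for $v_1$ and $v_3$ follows from the base case. For condition (B2), I pick an arbitrary neighbour $w_1$ of $v_1$ and chase it through the two bisimilarities. Applying (B2) to the first bisimilarity produces a neighbour $w_2$ of $v_2$ with $w_1 \nbsmsv{r-1} w_2$ together with elements $a, b, c$ such that $p_1(w_1,a) = (v_1,b)$ and $p_2(w_2,a) = (v_2,c)$. Applying (B2) to the second bisimilarity at the same $w_2$ yields a neighbour $w_3$ of $v_3$ with $w_2 \nbsmsv{r-1} w_3$ and elements $a', c', d$ such that $p_2(w_2,a') = (v_2,c')$ and $p_3(w_3,a') = (v_3,d)$. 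Since $p_2$ is a bijection that maps each output port to an input port of an adjacent node, there is exactly one outgoing port of $w_2$ attached to the edge $\{w_2,v_2\}$, which forces $a = a'$. Hence $p_1(w_1,a) = (v_1,b)$ and $p_3(w_3,a) = (v_3,d)$ share the required common port number, and the inductive hypothesis applied to $w_1 \nbsmsv{r-1} w_2 \nbsmsv{r-1} w_3$ gives $w_1 \nbsmsv{r-1} w_3$. This verifies (B2) for $v_1$ and $v_3$; condition (B3) is handled symmetrically by starting from a neighbour $w_3$ of $v_3$ and traversing the two bisimilarities in the opposite direction.

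The only subtle point is the port-number bookkeeping: Definition~\ref{def:nbisimsv} demands a single port number $a$ witnessing the outgoing connection of the matched neighbour in \emph{both} graphs, so when composing two applications of (B2), one has to verify that the $a$ from the first step and the $a'$ from the second step really are the same element. Recognising that both are uniquely determined as the outgoing port of $w_2$ towards $v_2$ under the bijection $p_2$ dispenses with this obstacle and makes the induction go through cleanly.
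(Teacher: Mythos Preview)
Your argument is correct and follows essentially the same inductive scheme as the paper's proof. In fact, you are more explicit on the one delicate point: the paper simply reuses the same port label~$j$ when applying (B2) a second time, without commenting on why this is legitimate, whereas you spell out that the bijectivity of the generalised port numbering forces the outgoing port of $w_2$ towards $v_2$ to be unique, so the two applications of (B2) necessarily agree on that label.
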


\begin{proof}
  We proceed by induction on $r$. The base case $r = 0$ is clear: if $(G,f,v,p) \nbsmsv{0} (G',f',v',p')$ and $(G',f',v',p') \nbsmsv{0} (G'',f'',v'',p'')$, then $(G,f,v,p) \nbsmsv{0} (G'',f'',v'',p'')$. Suppose then that relation~$\nbsmsv{r}$ is transitive for $r = s$ and that we have $(G,f,v,p) \nbsmsv{s+1} (G',f',v',p')$ and $(G',f',v',p') \nbsmsv{s+1} (G'',f'',v'',p'')$. Condition (B1) for $v$ and $v''$ is equivalent to the base case. If $\set{v,u} \in E$, condition~(B2) for $v$ and $v'$ implies that there is $u' \in V'$ with $\set{v',u'} \in E'$ such that $u \nbsmsv{s} u'$, and additionally, $p(u,j) = (v,i)$ and $p'(u',j) = (v',i')$ for some $j,i,i'$. Then, condition~(B2) for $v'$ and $v''$ implies that there is $u'' \in V''$ with $\set{v'',u''} \in E''$ such that $u' \nbsmsv{s} u''$ and $p''(u'',j) = (v'',i'')$ for some $i''$. By the inductive hypothesis, we have $u \nbsmsv{s} u''$, and thus $v$ and $v''$ satisfy condition~(B2). The case of the reverse condition~(B3) is very similar. We obtain $(G,f,v,p) \nbsmsv{s+1} (G'',f'',v'',p'')$, which shows that $\nbsmsv{r}$ is transitive for $r = s+1$.
\end{proof}

Finally, when given a generalised port numbering and a bisimilarity result, we need to be able to introduce an ordinary port numbering in order to actually apply the result to distributed algorithms. The following lemma shows that we can do this.

\begin{restatable}{lemma}{lempnchangebsm}\label{lem:pnchangebsm}
  Let $G = (V,E)$ and $G' = (V',E')$ be graphs, let $f$ and $f'$ be inputs for $G$ and $G'$, respectively, and let $p$ and $p'$ be generalised port numberings of $G$ and $G'$, respectively, with port numbers taken from a set~$N$. Suppose that $q$ and $q'$ are port numberings of $G$ and $G'$, respectively, such that $p(v,i) = (u,j)$ implies $q(v,g(i)) = (u,g(j))$ and $p'(v,i) = (u,j)$ implies $q'(v,g(i)) = (u,g(j))$ for some function $g \colon N \to \Np$. Then $(G,f,v,p) \nbsmsv{r} (G',f',v',p')$ implies $(G,f,v,q) \nbsmsv{r} (G',f',v',q')$ for all $v \in V$ and $v' \in V'$.
\end{restatable}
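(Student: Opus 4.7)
The plan is to prove the statement by induction on $r$, exploiting the fact that the $r$-$\aSV$-bisimilarity conditions do not refer to the absolute port numbers $a,b,c$, only to the existence of some shared index $a$ at the two endpoints. Since $g$ is applied uniformly on both sides, any witness $a$ under $(p,p')$ automatically becomes the witness $g(a)$ under $(q,q')$.

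For the base case $r=0$, condition (B1) involves only $\deg_G(v)$, $\deg_{G'}(v')$, $f(v)$, and $f'(v')$, none of which depend on the port numbering, so $0$-$\aSV$-bisimilarity carries over trivially.

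For the inductive step, assume the claim holds for $r=s$ and suppose $(G,f,v,p) \nbsmsv{s+1} (G',f',v',p')$. Condition (B1) for $(q,q')$ is immediate from the base case. To verify (B2), take any $w \in V$ with $\set{v,w} \in E$. Applying (B2) for $(p,p')$ yields some $w' \in V'$ with $\set{v',w'} \in E'$ such that $(G,f,w,p) \nbsmsv{s} (G',f',w',p')$, together with indices $a,b,c$ for which $p(w,a)=(v,b)$ and $p'(w',a)=(v',c)$. By the inductive hypothesis, $(G,f,w,q) \nbsmsv{s} (G',f',w',q')$. By the hypothesis on $g$, we also have $q(w,g(a)) = (v,g(b))$ and $q'(w',g(a)) = (v',g(c))$, so the triple $(g(a),g(b),g(c))$ witnesses (B2) for $(q,q')$. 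The argument for (B3) is completely symmetric. Hence $(G,f,v,q) \nbsmsv{s+1} (G',f',v',q')$, completing the induction.

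There is no real obstacle here: the only thing to be careful about is that $g$ is applied coherently to both ports of each matched edge on both the unprimed and primed sides, which is exactly what the hypothesis on $q,q'$ guarantees. The lemma is essentially a bookkeeping step saying that $\aSV$-bisimilarity is preserved under any relabeling of generalised port numbers that is the same on the two structures, which is what lets us move from generalised port numberings (convenient for constructions) to ordinary port numberings (required by Lemma~\ref{lem:nbsmsvstate}).
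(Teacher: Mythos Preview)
Your proof is correct and follows essentially the same approach as the paper's: induction on $r$, with the base case being trivial because $0$-$\aSV$-bisimilarity ignores port numbers, and the inductive step transferring each witness $a$ under $(p,p')$ to the witness $g(a)$ under $(q,q')$. The only difference is cosmetic (variable names), and your closing paragraph about why the lemma is needed is a helpful addition.
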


\begin{proof}
  We prove the claim by induction on $r$. The base case $r = 0$ is clear, since it does not depend on (generalised) port numbers. Suppose then that the claim holds for $r = s$ and $(G,f,v,p) \nbsmsv{s+1} (G',f',v',p')$. Condition~(B1) is equivalent to the base case. If $\set{v,u} \in E$, then by condition~(B2) there is $u' \in V'$ with $\set{v',u'}$ such that $(G,f,u,p) \nbsmsv{s} (G',f',u',p')$, and $p(u,j) = (v,i)$ and $p'(u',j) = (v',i')$ hold for some $j,i,i'$. From the inductive hypothesis we get $(G,f,u,q) \nbsmsv{s} (G',f',u',q')$, and by assumption, $q(u,g(j)) = (v,g(i))$ and $q'(u',g(j)) = (v',g(i'))$. Hence condition~(B2) holds also with respect to $q$ and $q'$. The case~(B3) is similar. This shows that $(G,f,v,q) \nbsmsv{s+1} (G',f',v',q')$ and thus the claim holds for $r = s+1$.
\end{proof}

\section{A Lower Bound for the Simulation Overhead}\label{sec:simlb}

Let us begin by restating the result that we will prove in this section.

\begin{thm1}
  For each $\Delta \ge 2$ there is a graph~$G = (V,E) \in \F(\Delta)$, a port numbering~$p$ of $G$ and nodes $v,u,w \in V$ such that when executing any algorithm~$\aA \in \aSV$ in $(G,p)$, node~$v$ receives identical messages from its neighbours $u$ and $w$ in rounds $1,2,\dotsc,2\Delta-2$.
\end{thm1}

To prove Theorem~\ref{thm:simulation}, we define for each $d = 2,3,\dotsc$ a graph~$G_d = (V_d,E_d)$ of maximum degree~$d$. The graph itself is just a rooted tree, but it gives rise to a port numbering with certain properties. The set~$V_d$ of nodes consists of sequences of pairs~$(i,j)$, where $i,j \in \set{0,1,\dotsc,d}$ will serve as a basis for port numbers, as we will see later. The sequence can be thought as a path leading from the root to the node itself. Our fundamental idea is that we construct the graph one level of nodes at a time, starting from the root, and assign generalised port numbers to each edge of a node by choosing the smallest numbers that have not yet been taken. The choice depends slightly on whether the level in question is even or odd.

We define the set~$V_d$ of nodes recursively as follows:
\begin{enumerate}[label=(G\arabic*)]
  \item $\emptyset \in V_d.$
  \item $((1,0)), ((2,1)), ((3,2)), ((4,3)),\dotsc,((d,d-1)) \in V_d.$
  \item If $(a_1,a_2,\dotsc,a_i) \in V_d$, where $i$ is odd and $i < 2d$, then $(a_1,a_2,\dotsc,a_{i+1}^j) \in V_d$ for all $j = 1,2,\dotsc,d-1$, where $a_{i+1}^j = (c_1^j, c_2^j)$ is defined as follows. Let $(b_1,b_2) = a_i$ and $b_2^+ = 1$ if $b_2 = 0$, $b_2^+ = b_2$ otherwise. Define
    \begin{align*}
      c_1^j &= \min(\set{1,2,\dotsc,d} \setminus \set{b_2^+,c_1^1,c_1^2,\dotsc,c_1^{j-1}}),\\
      c_2^j &= \min(\set{1,2,\dotsc,d} \setminus \set{b_1,c_2^1,c_2^2,\dotsc,c_2^{j-1}}).
    \end{align*}
  \item If $(a_1,a_2,\dotsc,a_i) \in V_d$, where $i$ is even and $0 < i < 2d$, then $(a_1,a_2,\dotsc,a_{i+1}^j) \in V_d$ for all $j = 1,2,\dotsc,d-1$, where $a_{i+1}^j = (c_1^j, c_2^j)$ is defined as follows. Let $(b_1,b_2) = a_i$. Define
    \begin{align*}
      c_1^j &= \min(\set{1,2,\dotsc,d} \setminus \set{b_2,c_1^1,c_1^2,\dotsc,c_1^{j-1}}),\\
      c_2^j &= \min(\set{0,1,\dotsc,d-1} \setminus \set{b_1,c_2^1,c_2^2,\dotsc,c_2^{j-1}}).
    \end{align*}
\end{enumerate}
The set $E_d$ of edges consists of all pairs $\set{v,u}$, where $v = (a_1,a_2,\dotsc,a_i) \in V_d$ and $u = (a_1,a_2,\dotsc,a_i,a_{i+1}) \in V_d$ for some $i \in \set{0,1,\dotsc}$. See Figure~\ref{fig:graphg} for an illustration of the radius-3 neighbourhood of node $\emptyset$ of $G_5$.

\begin{figure}[p]
  \captionsetup{skip=2cm}
  \centering
  \includegraphics[page=1,scale=0.85]{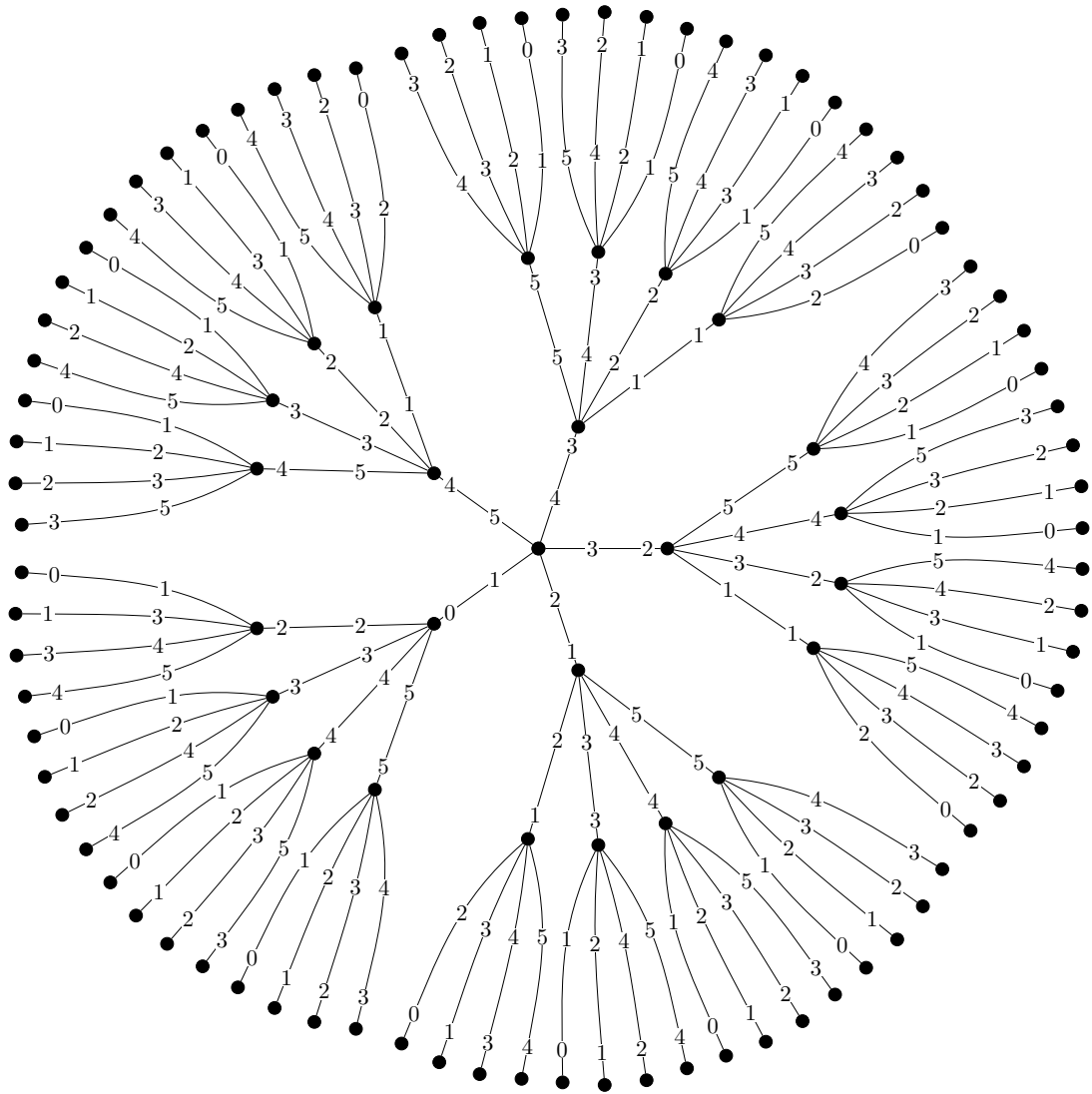}
  \caption{A part of the graph~$G_5$. The node in the centre is node~$\emptyset$. The numbers pictured are outgoing port numbers.}
  \label{fig:graphg}
\end{figure}

Consider nodes $v = (a_1,a_2,\dotsc,a_i)$ and $u = (a_1,a_2,\allowbreak\dotsc,a_{i+1})$, where $a_{i+1} = (b_1,b_2)$. The values $b_1$ and $b_2$ serve as generalised port numbers for the edge~$\set{v,u}$. We define $p_d(v,b_1) = (u,b_2)$ and $p_d(u,b_2) = (v,b_1)$. The incoming port numbers will be irrelevant in this proof, since we only consider algorithms in the classes $\aSV$ and $\aMV$. Thus, we will mostly use the notation $\pi_d(v,u) = b_1$ and $\pi_d(u,v) = b_2$ to denote the outgoing port numbers.

If $v = (a_1,a_2,\dotsc,a_i)$ and $u = (a_1,a_2,\dotsc,a_{i+1})$, we say that node~$v$ is the \emph{parent} of node~$u$ and that $u$ is a \emph{child} of $v$. We say that the node~$v$ is \emph{even} if $i$ is even and \emph{odd} if $i$ is odd. If $a_i = (b_1,b_2)$, we call $(b_1,b_2)$ the \emph{type} of node~$v$.

A \emph{walk} is a sequence $\w = (v_0,v_1,\dotsc,v_k)$ of nodes such that $\set{v_i,v_{i+1}} \in E_d$ for all $i = 0,1,\dotsc,k-1$. A pair~$(\w_1,\w_2)$ of walks, where $\w_i = (v_0^i,v_1^i,\dotsc,v_k^i)$ for all $i = 1,2$, and $k \le 2d-3$, is called a \emph{pair of compatible walks (PCW) of length~$k$ in $G_d$} if the following two conditions hold:
\begin{enumerate}[label=(W\arabic*)]
  \item $v_0^1 = ((1,0))$ and $v_0^2 = ((2,1))$.
  \item $\pi_d(v_j^1,v_{j-1}^1) = \pi_d(v_j^2,v_{j-1}^2)$ for all $j = 1,2,\dotsc,k$.
\end{enumerate}
If we additionally have the following for a PCW, it is called a \emph{pair of separating walks (PSW)}:
\begin{enumerate}[label=(W\arabic*),resume]
  \item There is $v_{k+1}^1 \in V_d$ with $\set{v_k^1,v_{k+1}^1} \in E_d$ such that there is no $v_{k+1}^2 \in V_d$ for which $\set{v_k^2,v_{k+1}^2} \in E_d$ and $\pi_d(v_{k+1}^1,v_k^1) = \pi_d(v_{k+1}^2,v_k^2)$.
\end{enumerate}
We say that a pair of separating walks of length~$k$ in $G_d$ is \emph{critical} if there does not exist a pair of separating walks of length~$k'$ in $G_d$ for any $k' < k$.

Consider the graph~$G_5$ in Figure~\ref{fig:graphg}. One example of a PSW in $G_5$ is the pair~$(\w_1,\w_2)$, where $\w_i = (v_0^i,v_1^i,\dotsc,v_7^i)$ for all $i = 1,2$, and the sequence $\pi_5(v^i_j,v^i_{j-1})$, $j = 1,2,\dotsc,7$, of generalised port numbers is 2, 2, 3, 3, 4, 4, 5. Observe that now node~$v^1_7$ has a neighbour~$v^1_{8}$ with $\pi_5(v^1_{8},v^1_7) = 5$, but node~$v^2_7$ does not have such a neighbour. The fact that the sequence grows slowly towards the parameter~$d$ is actually a general property of PSWs; this is one of the crucial ideas behind our proof.

The outline of the proof is as follows. First, we will prove auxiliary results concerning the graphs~$G_d$ and PSWs. These will enable us to obtain a lower bound for the length of PSWs. Then, we will show that this lower bound entails bisimilarity of the nodes $((1,0))$ and $((2,1))$ up to the respective distance. Since the overall proof is going to be a little hairy, we provide a chart of dependencies between the various lemmas in Figure~\ref{fig:depchart}. The first four lemmas follow quite easily from the definition of the graphs.

\begin{figure}[p]
  \captionsetup{skip=2cm}
  \centering
  \includegraphics[page=4]{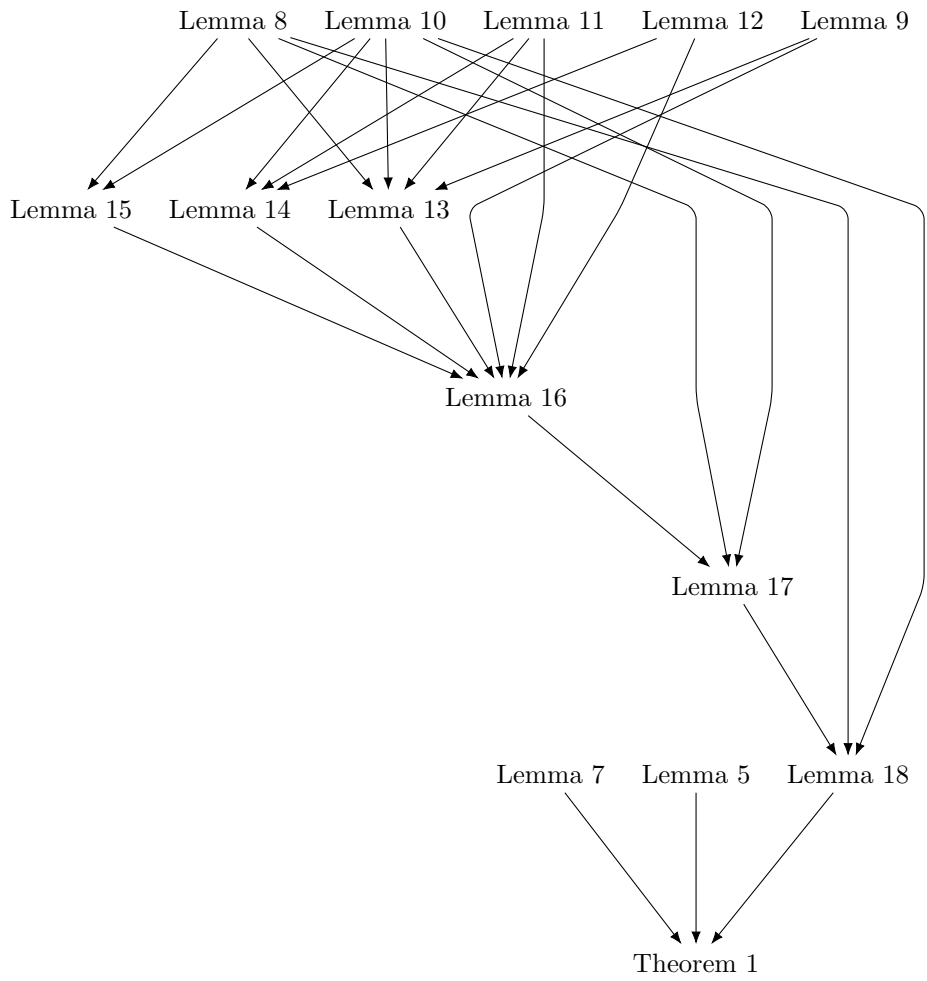}
  \caption{Dependencies between the lemmas that are needed in order to prove Theorem~\ref{thm:simulation}.}
  \label{fig:depchart}
\end{figure}

\begin{restatable}{lemma}{lemdegree}\label{lem:degree}
  For each $d$, we have $\deg(v) \in \set{1,d}$ for all $v \in V_d$, and thus $G_d \in \F(d)$. Additionally, $G_d$ is a subgraph of $G_{d+1}$.
\end{restatable}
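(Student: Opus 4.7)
The plan is to verify the two statements directly from the recursive construction of $V_d$ and $E_d$, using nothing beyond the definitions.

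For the degree claim, I would argue by inspecting each generation rule. Rule~(G1) together with (G2) places $\emptyset$ in $V_d$ and lists exactly $d$ level-$1$ neighbours $((1,0)),\dotsc,((d,d-1))$, so $\deg(\emptyset)=d$. For any non-root node $v=(a_1,\dotsc,a_i)$ with $0<i<2d$, the prefix $(a_1,\dotsc,a_{i-1})$ is its unique parent, and exactly one of (G3) (for $i$ odd) or (G4) (for $i$ even) applies, producing one child $(a_1,\dotsc,a_i,a_{i+1}^j)$ for each $j=1,\dotsc,d-1$. These children are pairwise distinct because $c_1^j$ is defined by a minimum that explicitly avoids $c_1^1,\dotsc,c_1^{j-1}$. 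Hence $\deg(v)=1+(d-1)=d$. Finally, if $i=2d$ then neither (G3) nor (G4) is applicable (both require $i<2d$), so $v$ has no children and $\deg(v)=1$. This gives $\deg(v)\in\set{1,d}$ for all $v\in V_d$, and in particular $G_d\in\F(d)$.

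For the subgraph claim, I would prove by induction on $i$ that every $v=(a_1,\dotsc,a_i)\in V_d$ is a node of $V_{d+1}$ \emph{with the same sequence}; the edge-inclusion $E_d\subseteq E_{d+1}$ then follows immediately, since $E_d$ and $E_{d+1}$ are both defined as parent-child pairs. The cases $i=0,1$ are immediate from (G1) and (G2), as the $d$ nodes listed in (G2) for $V_d$ are the first $d$ of the $d+1$ listed for $V_{d+1}$. For the inductive step, I would compare (G3) and (G4) as evaluated in $G_d$ versus $G_{d+1}$: the formulas are identical, but the minima range over a universe larger by one element in $G_{d+1}$, and $G_{d+1}$ produces $d$ children instead of $d-1$. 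The key auxiliary fact is that for any $S\subseteq\N$ with $|S|\le d-1$,
\[
  \min\bigl(\set{1,\dotsc,d}\setminus S\bigr)=\min\bigl(\set{1,\dotsc,d+1}\setminus S\bigr),
\]
and likewise with $\set{0,\dotsc,d-1}$ and $\set{0,\dotsc,d}$, because enlarging the universe at the top cannot lower the minimum as long as the smaller universe still contains an element outside $S$. A short secondary induction on $j=1,\dotsc,d-1$, applying this fact to both the $c_1^j$ and $c_2^j$ formulas at each step, then shows that the first $d-1$ pairs $(c_1^j,c_2^j)$ computed in $G_{d+1}$ coincide with those in $G_d$. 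Thus the $d-1$ level-$(i+1)$ children of $v$ in $V_d$ are exactly the first $d-1$ of its level-$(i+1)$ children in $V_{d+1}$, closing the induction.

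The main ``obstacle'' here is genuinely just bookkeeping: one has to check that the exclusion set appearing in each formula has exactly $j\le d-1$ distinct elements (so the minimum-invariance fact applies at every step), and handle the small asymmetries between (G3) with its use of $b_2^+$ and (G4) with its universe $\set{0,\dotsc,d-1}$. No deeper structural insight beyond ``the minimum is stable under enlarging the universe at the top'' is required.
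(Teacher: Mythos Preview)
Your proof is correct and follows the same approach as the paper's: both establish the degree by casing on the level $i$ of the node (root, internal, leaf), and both derive the subgraph claim directly from the recursive rules (G1)--(G4). The paper's proof is terser on the second part, essentially asserting that membership transfers, whereas you spell out the minimum-stability argument explicitly; this is a difference in level of detail, not in method.
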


\begin{proof}
  Consider a node $v = (a_1,a_2,\dotsc,a_i) \in V_d$. It follows from the definition that if $\set{v,u} \in E_d$, then $u$ is either a parent or a child of $v$. If $i = 0$, $v$ has no parents, and all its children are given by rule~(G2). Hence $\deg(v) = d$. If $0 < i < 2d$, $v$ has one parent, and all its children are given by rule~(G3) or (G4). There are $d-1$ children, and hence $\deg(v) = d$. If $i \ge 2d$, $v$ has no children, and hence $\deg(v) = 1$.

  It follows from the rules (G1)--(G4) that if $v \in V_d$, then also $v \in V_{d+1}$. Additionally, if $\set{v,u} \in E_d$, then clearly $\set{v,u} \in E_{d+1}$. This shows that $G_d$ is a subgraph of $G_{d+1}$.
\end{proof}

\vspace{-1.5ex}

\begin{restatable}{lemma}{lempnuniq}\label{lem:pnuniq}
  Let $v \in V_d$ and $a \in \set{0,1,\dotsc,d}$. Then there is at most one node $u \in V_d$ such that $\set{v,u} \in E_d$ and $\pi_d(u,v) = a$.
\end{restatable}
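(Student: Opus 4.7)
The plan is to fix an arbitrary $v = (a_1,a_2,\dotsc,a_i) \in V_d$ and enumerate its neighbours, then verify by direct inspection of rules (G2)--(G4) that the outgoing port numbers from the neighbours toward $v$ are pairwise distinct. Every neighbour $u$ of $v$ is either the parent $v^- = (a_1,\dotsc,a_{i-1})$ (present exactly when $i \ge 1$) or a child $v^j = (a_1,\dotsc,a_i,a_{i+1}^j)$ with $a_{i+1}^j = (c_1^j,c_2^j)$ (present exactly when $i < 2d$).

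Unwinding the definition $p_d(u,b_2) = (v,b_1)$ for any edge where $u$ is the child with type $(b_1,b_2)$: if $u = v^-$ and $a_i = (b_1,b_2)$, then $\pi_d(v^-,v) = b_1$; if $u = v^j$, then $\pi_d(v^j,v) = c_2^j$. So the question reduces to showing that $b_1$ (when applicable) and the values $c_2^1,c_2^2,\dotsc,c_2^{d-1}$ are all distinct.

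This is immediate from reading off the formula for $c_2^j$ in both rule (G3) and rule (G4): in each case
\[
  c_2^j = \min(S_j \setminus \set{b_1,c_2^1,c_2^2,\dotsc,c_2^{j-1}}),
\]
where $S_j$ is either $\set{1,\dotsc,d}$ or $\set{0,1,\dotsc,d-1}$. The excluded set explicitly contains $b_1$ together with all previously chosen $c_2^{j'}$, so $c_2^j \ne b_1$ and $c_2^j \ne c_2^{j'}$ for every $j' < j$. Hence the $d-1$ child-side port numbers together with the parent-side port number $b_1$ form a set of $d$ distinct values, and at most one neighbour realises any fixed $a$.

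The remaining case is $i = 0$, i.e.\ $v = \emptyset$: here $v$ has no parent, and rule (G2) provides its $d$ children with types $(1,0),(2,1),\dotsc,(d,d-1)$, so the port numbers toward $v$ are $0,1,\dotsc,d-1$, all distinct. There is no real obstacle; the only thing to be careful about is keeping track of which coordinate of $a_i$ corresponds to which endpoint of the edge, in order to confirm that it is precisely $b_1$ (and not $b_2$) that gets excluded when building the sequence $c_2^1,\dotsc,c_2^{d-1}$.
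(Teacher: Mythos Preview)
Your proof is correct and follows exactly the approach of the paper, which simply remarks that the claim is immediate from rule~(G2) and the way the numbers $c_2^j$ are defined in rules~(G3) and~(G4); you have just spelled out those details explicitly.
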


\begin{proof}
  The claim follows immediately from rule~(G2) and the way the numbers~$c_2^j$ are defined in rules (G3) and (G4).
\end{proof}

A consequence of Lemma~\ref{lem:pnuniq} is that in a walk, the successor of each node is uniquely determined by the port number from the successor to the node.

\begin{restatable}{lemma}{lempnexist}\label{lem:pnexist}
  Let $v = (a_1,a_2,\dotsc,a_i) \in V_d$, where $i < 2d$. If $v$ is odd, then for all $a \in \set{1,2,\dotsc,d}$ there exists $u \in V_d$ such that $\set{v,u} \in E_d$ and $\pi_d(u,v) = a$. If $v$ is even, then either for all $a \in \set{0,1,\dotsc,d-1}$ or for all $a \in \set{0,1,\dotsc,d-2,d}$ there exists $u \in V_d$ such that $\set{v,u} \in E_d$ and $\pi_d(u,v) = a$. In the case of even $v$ and $a = d$, node~$u$ is the parent of node~$v$.
\end{restatable}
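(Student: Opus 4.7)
The proof is a straightforward case analysis on the parity of $v$ and on whether $v$ is the root. In each case it reduces to listing the neighbours of $v$ and reading off the port number $\pi_d(u,v)$ contributed by each one: recall that if $a_i=(b_1,b_2)$ and $v$ has a parent, then the parent contributes $b_1$, and each child of type $(c_1^j,c_2^j)$ contributes $c_2^j$.

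The key preliminary observation, immediate from inspecting the rules (G2)--(G4), is that for every $v=(a_1,\ldots,a_i)\in V_d$ with $i\ge 1$ we have $b_1\in\{1,\ldots,d\}$: rule (G2) produces $(k,k-1)$ with $k\in\{1,\ldots,d\}$, and in both (G3) and (G4) the first coordinate $c_1^j$ is drawn from $\{1,\ldots,d\}$. Likewise, if $v$ is odd the second coordinate satisfies $b_2\in\{0,\ldots,d-1\}$, because only rules (G2) and (G4) can have created $v$, and both draw their second coordinate from $\{0,\ldots,d-1\}$. Neither observation needs induction — it is a one-line check per rule.

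Now I would carry out the cases. If $v$ is odd, then $1\le i<2d$ and rule (G3) applies at $v$ to create $d-1$ children, whose $c_2^j$ values are the elements of $\{1,\ldots,d\}\setminus\{b_1\}$ listed in increasing order; this set has exactly $d-1$ elements so every one of them occurs, and together with the parent's contribution $b_1$ we cover all of $\{1,\ldots,d\}$. If $v$ is even and $i=0$, then $v=\emptyset$ and its $d$ children produced by (G2) contribute precisely the ports $\{0,1,\ldots,d-1\}$, giving the first alternative. If $v$ is even and $0<i<2d$, rule (G4) applies and its $c_2^j$ values are drawn in increasing order from $\{0,1,\ldots,d-1\}\setminus\{b_1\}$; here I split on $b_1$, using $b_1\in\{1,\ldots,d\}$ from the preliminary observation. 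When $b_1\in\{1,\ldots,d-1\}$ this pool has exactly $d-1$ elements and is exhausted, so the children contribute $\{0,\ldots,d-1\}\setminus\{b_1\}$ and, together with the parent's $b_1$, the realized set is $\{0,1,\ldots,d-1\}$. When $b_1=d$ the pool is all of $\{0,\ldots,d-1\}$, so only its smallest $d-1$ elements $\{0,\ldots,d-2\}$ are selected; adjoining the parent's port $d$ yields $\{0,1,\ldots,d-2,d\}$, and $d$ is contributed solely by the parent, which matches the final sentence of the statement. There is no real obstacle in this proof; the only point that needs care is keeping track of which of (G2)--(G4) created $v$ so that the correct range constraints on $b_1$ and $b_2$ are available at each step.
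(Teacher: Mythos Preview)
Your proof is correct and follows essentially the same approach as the paper's own proof: a case split on the parity of $v$, reading off the $c_2^j$ values produced by the relevant rule, and splitting the even case on whether $b_1=d$. Your version is slightly more thorough in that you treat the root $v=\emptyset$ as a separate subcase, whereas the paper tacitly folds it into the general even case.
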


\begin{proof}
  Observe that in rules~(G3) and (G4), we always have $b_1 \in \set{1,2,\dotsc,d}$. If $v$ is odd, the claim follows from the way the numbers~$c_2^j$ are defined in rule~(G3). If $v$ is even, consider the application of rule~(G4) to $v$. If $b_1 < d$, then $c_2^j$ will range over all the elements in $\set{0,1,\dotsc,d-1} \setminus \set{b_1}$, and thus for all $a = \set{0,1,\dotsc,d-1}$ there is a neighbour~$u$ such that $\pi_d(u,v) = a$. If $b_1 = d$, then $c_2^j$ will range over all the elements in $\set{0,1,\dotsc,d-2}$, and thus for all $a = \set{0,1,\dotsc,d-2,d}$ there is a neighbour~$u$ such that $\pi_d(u,v) = a$. We have always $c_2^j \ne d$, and hence the case $a = d$ is only possible if $b_1 = d$. It follows that if $\pi_d(u,v) = d$, $u$ is the parent of $v$.
\end{proof}

Lemma~\ref{lem:pnexist} implies that in a PSW, the last nodes of each walk must be even. Furthermore, one of the last nodes~$v$ must have a parent~$u$ with $\pi_d(u,v) = d$. It follows that we must have $v \in V_d \setminus V_{d-1}$.

\begin{restatable}{lemma}{lemnewedge}\label{lem:newedge}
  Let $\set{v,u} \in E_{d+1} \setminus E_d$ be such that $v \in V_d$. Then $u$ is a child of $v$. If $v$ is odd, then $\pi_{d+1}(v,u) = \pi_{d+1}(u,v) = d+1$. If $v$ is even, then $\pi_{d+1}(v,u) = d+1$ and $\pi_{d+1}(u,v) \in \set{d-1,d}$.
\end{restatable}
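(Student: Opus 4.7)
The plan is to reduce the claim to a direct computation with the construction rules (G2)--(G4), organised into cases according to the parity of the level of $v$. First I would argue that $u$ must be a child of $v$ in $V_{d+1}$ and not its parent: if $u$ were the parent, then $u$ would be a prefix of $v \in V_d$ and hence itself in $V_d$, so the edge $\set{v,u}$ would already lie in $E_d$, contradicting the hypothesis. The same observation forces $u \notin V_d$, so $u$ is a genuinely new child, produced by one of the construction rules run with parameter $d+1$ but not with $d$.

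The observation driving the computation is a monotonic-min property of the rules: going from $V_d$ to $V_{d+1}$ enlarges each min-selection base set by exactly one new element, namely $d+1$ for $c_1^j$ in both (G3) and (G4) and for $c_2^j$ in (G3), and $d$ for $c_2^j$ in (G4). Since each new element is strictly larger than all pre-existing elements, it cannot be selected by the min as long as any smaller unused element remains. Hence for $j = 1, \ldots, d-1$ the $V_{d+1}$-picks coincide with the $V_d$-picks, and the unique new child of $v$ corresponds to $j = d$ in the rule as applied in $V_{d+1}$.

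It then remains to read off the coordinates of this new child. If $v$ is odd (rule (G3)), the first $d-1$ picks of $c_1^j$ in $V_d$ exhaust $\set{1,\ldots,d}\setminus\set{b_2^+}$, leaving $d+1$ as the only remaining candidate, so $c_1^d = d+1$; symmetrically, using $\set{1,\ldots,d}\setminus\set{b_1}$, one gets $c_2^d = d+1$. If $v$ is even and non-root (rule (G4)), the same argument yields $c_1^d = d+1$, and for the second coordinate there are two subcases: if $b_1 < d$ the $V_d$-picks exhaust $\set{0,\ldots,d-1}\setminus\set{b_1}$ so that $c_2^d = d$, while if $b_1 = d$ the $V_d$-picks exhaust $\set{0,\ldots,d-2}$ so that $c_2^d = d-1$. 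The root case under rule (G2) is immediate: the only new child is $((d+1,d))$, contributing the pair $(d+1,d)$. Translating via $\pi_{d+1}(v,u) = c_1^d$ and $\pi_{d+1}(u,v) = c_2^d$ then gives the port numbers claimed in the lemma.

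The main obstacle is the index bookkeeping required to check that the $V_d$-picks really do exhaust the relevant base set after exactly $d-1$ steps. For rule (G3) this uses $b_2^+ \in \set{1,\ldots,d-1}$ and $b_1 \in \set{1,\ldots,d}$, and for rule (G4) it uses $b_2,b_1 \in \set{1,\ldots,d}$; all of these follow from the observation that the type $a_i = (b_1,b_2)$ of $v$ was itself produced by one of the same rules (or by (G2) if $i=1$) applied in $V_d$. The subcase $b_1 = d$ of rule (G4) is the trickiest point, since the removed element $b_1=d$ does not even lie in the base set $\set{0,\ldots,d-1}$, so the surviving element is $d-1$ rather than the $d$ one might naively expect.
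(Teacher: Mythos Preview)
Your proposal is correct and follows essentially the same approach as the paper: rule out the parent case by the prefix observation, identify the new child as the one with index $j=d$, and then compute $c_1^d$ and $c_2^d$ directly from the min-selection rules using the bounds $b_1,b_2^+,b_2 \le d$. You are in fact slightly more thorough than the paper, which handles the even case via rule~(G4) without explicitly singling out the root $v=\emptyset$ (where rule~(G2) applies and the new child is $((d+1,d))$); your separate treatment of that case is a welcome addition.
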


\begin{proof}
  Since $\set{v,u} \in E_{d+1}$, $u$ is either the parent or a child of $v$. If it was the parent, we would have $u \in V_d$ and thus $\set{v,u} \in E_d$, a contradiction. Hence $u \in V_{d+1} \setminus V_d$ is a child of $v$. If $v = (a_1,a_2,\dotsc,a_i)$ is odd, $u$ is given by rule~(G3) in the definition of $G_{d+1}$. Since $(a_1,a_2,\dotsc,a_{i+1}^j) \in V_d$ for all $j = 1,2,\dotsc,d-1$, we have $u = (a_1,a_2,\dotsc,a_{i+1}^d)$. As $v \in V_d$, we have $b_1,b_2^+ \le d$, and thus $c_1^d = c_2^d = d+1$. This implies $\pi_{d+1}(v,u) = \pi_{d+1}(u,v) = d+1$. If $v = (a_1,a_2,\dotsc,a_i)$ is even, $u$ is given by rule~(G4) in the definition of $G_{d+1}$. Again, we have $u = (a_1,a_2,\dotsc,a_{i+1}^d)$, $b_1,b_2 \le d$ and thus $c_1^d = d+1$. If $b_1 = d$, then $c_2^d = d-1$, otherwise $c_2^d = d$. This implies $\pi_{d+1}(v,u) = d+1$ and $\pi_{d+1}(u,v) \in \set{d-1,d}$.
\end{proof}

With the above observations out of the way, we now go forward with more powerful results.

\begin{restatable}{lemma}{lemidedges}\label{lem:idedges}
  Let $(\w_1,\w_2)$, where $\w_i = (v_0^i,v_1^i,\dotsc,v_k^i)$ for all $i = 1,2$, be a PSW in $G_d$. If for some $\ell \in \set{0,1,\dotsc,k-1}$ the node~$v_{\ell+1}^i$ is a child of node~$v_\ell^i$ for all $i = 1,2$, and we have $\pi_d(v_\ell^1,v_{\ell+1}^1) = \pi_d(v_\ell^2,v_{\ell+1}^2)$, then $(\w_1,\w_2)$ is not a critical PSW in $G_d$.
\end{restatable}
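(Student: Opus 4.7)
The plan is to collapse the ``matching downward detour'' at position $\ell+1$ and exhibit a strictly shorter PSW. The crucial observation is that $v_{\ell+1}^1$ and $v_{\ell+1}^2$ share both their type and their parity. The type $(b_1,b_2)$ is common because its first coordinate $\pi_d(v_\ell^i,v_{\ell+1}^i)$ agrees by hypothesis and its second coordinate $\pi_d(v_{\ell+1}^i,v_\ell^i)$ agrees by condition~(W2). The parity agrees because each step of a walk changes depth by $\pm 1$ and hence flips parity, so starting from the common odd depth of $v_0^1$ and $v_0^2$ the nodes $v_j^1$ and $v_j^2$ have equal parity for every~$j$. Since rules~(G3) and~(G4) derive every descendant's type from the parent's type and the child index (the rule itself depending only on parity), the subtrees of $G_d$ rooted at $v_{\ell+1}^1$ and $v_{\ell+1}^2$ are isomorphic as port-numbered rooted trees; let $\phi$ denote this isomorphism.

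I then show by induction on $j \geq \ell+1$ that as long as $v_j^1$ stays inside the subtree rooted at $v_{\ell+1}^1$, we have $\phi(v_j^1) = v_j^2$; in particular, the two walks exit their subtrees at the same step, or stay in them together. The inductive step combines Lemma~\ref{lem:pnuniq} with condition~(W2): the common incoming port number, together with the port preservation of~$\phi$, forces $v_{j+1}^1$ and $v_{j+1}^2$ to be $\phi$-related. Exiting a subtree requires the edge from $v_{\ell+1}^i$ back to~$v_\ell^i$, i.e.\ incoming port $b_1 = \pi_d(v_\ell^i,v_{\ell+1}^i)$, a value common to both sides, so exits are synchronous. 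Let $m \geq \ell+2$ denote the earliest step (possibly $+\infty$) at which $v_m^i = v_\ell^i$ for both $i$.

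If $m > k$, then $v_k^1$ and $v_k^2$ are $\phi$-related, so they share their type, parity and hence the set of port numbers $\pi_d(u,v_k^i)$ as $u$ ranges over their neighbours (even in the boundary case $v_k^i = v_{\ell+1}^i$, since the external edge carries port~$b_1$ on both sides). This contradicts condition~(W3) for $(\w_1,\w_2)$, so we must have $m \leq k$. Now form the shortened pair $\w_i^\star = (v_0^i,\dotsc,v_\ell^i,v_{m+1}^i,\dotsc,v_k^i)$, where the tail is empty if $m = k$. Since $v_m^i = v_\ell^i$, consecutive entries are adjacent in~$G_d$; condition~(W1) is immediate, (W2) is inherited edge by edge from the original walk, and (W3) is inherited at the unchanged endpoint $v_k^i$. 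The length $\ell + (k - m) \leq k - 2$ is strictly less than~$k$, so $(\w_1,\w_2)$ is not critical.

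The delicate point is the induction in the second paragraph: the synchrony of the two walks must be preserved both when they descend further into and when they ascend out of the common subtree structure, which is exactly where the uniqueness statement of Lemma~\ref{lem:pnuniq} is indispensable.
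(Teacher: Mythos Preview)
Your proof is correct and takes essentially the same approach as the paper: both establish that $v_{\ell+1}^1$ and $v_{\ell+1}^2$ share a type, propagate this along the two walks (you via an explicit isomorphism~$\phi$, the paper via a terse ``same type'' induction), and then either contradict~(W3) or excise the loop $v_\ell^i,\dotsc,v_m^i$ to obtain a strictly shorter PSW. One minor caveat: the full subtrees rooted at $v_{\ell+1}^1$ and $v_{\ell+1}^2$ need not be globally isomorphic, since these nodes may sit at different depths in $G_d$ and hence have subtrees of different heights; however, the bound $k\le 2d-3$ guarantees that neither walk can reach a leaf, so $\phi$ is well-defined everywhere you actually invoke it.
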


\begin{proof}
  Suppose that for all $m = \ell+2,\ell+3,\dotsc,k$ we have $v_m^1 \ne v_\ell^1$ or $v_m^2 \ne v_\ell^2$. By assumption, $v_{\ell+1}^1$ and $v_{\ell+1}^2$ are of the same type. Consider the definition of $G_d$. Now it is easy to show by induction on $m$ that nodes $v_m^1$ and $v_m^2$ are of the same type for all $m = \ell+1,\ell+2,\dotsc,k$. Since $k \le 2d-3$, both $v_k^1$ and $v_k^2$ have child nodes. It follows that if $v_{k+1}^1$ is a neighbour of $v_k^1$, there is a neighbour~$v_{k+1}^2$ of $v_k^2$ such that $\pi_d(v_{k+1}^1,v_k^1) = \pi_d(v_{k+1}^2,v_k^2)$. Thus $(\w_1,\w_2)$ is not a PSW in $G_d$, a contradiction.

  Now $v_m^1 = v_\ell^1$ and $v_m^2 = v_\ell^2$ for some $m \in \set{\ell+2,\ell+3,\dotsc,k}$. Let
  \[
    \w'_i = (v_0^i,v_1^i,\dotsc,v_\ell^i,v_{m+1}^i,v_{m+2}^i,\dotsc,v_k^i)
  \]
  for all $i = 1,2$. Then $(\w'_1,\w'_2)$ is a PSW of length $k-m+\ell \le k-(\ell+2)+\ell = k-2 < k$ in $G_d$ and hence $(\w_1,\w_2)$ is not critical.
\end{proof}

\vspace{-1.5ex}

\begin{restatable}{lemma}{lemextendpsw}\label{lem:extendpsw}
  Let $(\w_1,\w_2)$ be a PSW of length~$k$ in $G_d$. Then there is a PSW of length~$k+2$ in $G_{d+1}$.
\end{restatable}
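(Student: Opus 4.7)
The plan is to construct the extension explicitly by appending two nodes to each walk, exploiting the ``new'' edges in $G_{d+1}$. Since $G_d$ is a subgraph of $G_{d+1}$ (Lemma~\ref{lem:degree}) and the port numbers for edges in $E_d$ are preserved---the minimum-selection rules (G3)--(G4) agree on the first $d-1$ children in both graphs---the original walks already satisfy (W1) and (W2) inside $G_{d+1}$. What fails in general is (W3): the extra child of $v_k^2$ available in $G_{d+1}$ covers precisely the port $a$ that separated the walks in $G_d$, so I must push two more steps past $v_k^1,v_k^2$.

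First I would pin down the separating port $a$. By the remark after Lemma~\ref{lem:pnexist} both $v_k^1$ and $v_k^2$ are even, and a case analysis on their $b_1$-coordinates forces $a\in\set{d-1,d}$; after possibly swapping the two walks, $v_k^2$ is the node whose $b_1$ makes it miss port~$a$ in $G_d$. A direct computation of $c_2^d$ in rule (G4) of $G_{d+1}$ then shows that the new child of $v_k^2$ has parent-facing port exactly $a$: when $v_k^2$'s $b_1=d$ the formula yields $c_2^d=d-1=a$, and otherwise $c_2^d=d=a$ (the root case $v_k^2=\emptyset$ works the same way via rule (G2), with the new child $((d+1,d))$). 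Taking $v_{k+1}^1$ to be the unique port-$a$ neighbour of $v_k^1$ (Lemma~\ref{lem:pnuniq}) and $v_{k+1}^2$ to be this new child gives two odd nodes, and (W2) at step $k+1$ holds on the common port $a$.

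Next I would extend one more step using the port $d+1$, which is genuinely new in $G_{d+1}$. Let $v_{k+2}^1$ be the new child of the odd node $v_{k+1}^1\in V_d$ in $G_{d+1}$; Lemma~\ref{lem:newedge} combined with a short computation of $c_1^d,c_2^d$ via rule (G3) gives it type $(d+1,d+1)$. Set $v_{k+2}^2:=v_k^2$, revisiting along the parent-edge of $v_{k+1}^2$, which has back-port $d+1$ on the $v_k^2$-side because $v_{k+1}^2$'s $b_1$-coordinate equals $d+1$. Hence (W2) at step $k+2$ also matches, on port $d+1$. For (W3) in $G_{d+1}$ I would separate on port $d+1$: the node $v_{k+2}^1$ has $v_{k+1}^1$ itself as a port-$(d+1)$ neighbour (by its type $(d+1,d+1)$), whereas $v_{k+2}^2=v_k^2$ is an even node in $V_d$ whose neighbour-ports in $G_{d+1}$ form the set $\set{0,1,\dotsc,d}$---the $d-1$ old children keep their ports, the one new child contributes $d-1$ or $d$, and the parent contributes a value $\le d$---so none equals $d+1$.

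The main obstacle will be the bookkeeping around edge cases: the root $\emptyset$ at depth $0$, and the question of whether either $v_k^i$ might be a depth-$2d$ leaf of $G_d$ (where Lemma~\ref{lem:newedge} does not apply directly). The leaf case is ruled out by the PSW length bound, since starting from depth $1$ with $k\le 2d-3$ neither walk can reach depth $2d$ (which would require at least $2d-1$ steps), so only the root-at-depth-$0$ case needs separate attention, and there rule (G2) produces exactly the new child required. With these checks in place, the two extended walks satisfy (W1)--(W3) and form a PSW of length $k+2$ in $G_{d+1}$.
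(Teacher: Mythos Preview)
Your proposal is correct and follows essentially the same construction as the paper: extend $\w_1$ by the separating neighbour $u$ of $v_k^1$ and then by the new child $x$ of $u$ in $G_{d+1}$, while extending $\w_2$ by the new child $y$ of $v_k^2$ and then back to $v_k^2$; the separation in $G_{d+1}$ is then witnessed at port $d+1$. The only cosmetic difference is that the paper normalises to the case $a=d$ via a without-loss-of-generality swap of the two walks, whereas you keep both cases $a\in\set{d-1,d}$ and verify each by computing $c_2^d$ directly; your more explicit handling of the root and leaf edge cases is also fine (the paper leaves these implicit).
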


\begin{proof}
  Let $\w_i = (v_0^i,v_1^i,\dotsc,v_k^i)$ for all $i = 1,2$. By definition, there is a neighbour~$u \in V_d$ of $v_k^1$ such that for each neighbour~$w \in V_d$ of $v_k^2$ we have $\pi_d(u,v_k^1) \ne \pi_d(w,v_k^2)$. Lemma~\ref{lem:pnexist} implies that $v_k^1$ and $v_k^2$ are even, $\pi_d(u,v_k^1) \in \set{d-1,d}$, and there is a neighbour~$w \in V_d$ of $v_k^2$ for which $\pi_d(w,v_k^2) \in \set{d-1,d} \setminus \set{\pi_d(u,v_k^1)}$. That is, we have $\pi_d(u,v_k^1) = d$ or $\pi_d(w,v_k^2) = d$. Without loss of generality, we can assume $\pi_d(u,v_k^1) = d$ and thus $\pi_d(w,v_k^2) = d-1$.

  Lemma~\ref{lem:degree} implies that $\deg_{G_d}(u) = \deg_{G_d}(v_k^2) = d$ and $\deg_{G_{d+1}}(u) = \deg_{G_{d+1}}(v_k^2) = d+1$. Hence there are nodes $x,y \in V_{d+1} \setminus V_d$ such that $\set{u,x} \in E_{d+1} \setminus E_d$ and $\set{v_k^2,y} \in E_{d+1} \setminus E_d$. Note that $u,v_k^2 \in V_d$, $u$ is odd and $v_k^2$ is even. It follows from Lemma~\ref{lem:newedge} that $\pi_{d+1}(u,x) = \pi_{d+1}(x,u) = d+1$, $\pi_{d+1}(v_k^2,y) = d+1$ and $\pi_{d+1}(y,v_k^2) \in \set{d-1,d}$. Since $\pi_{d+1}(w,v_k^2) = \pi_d(w,v_k^2) = d-1$ and $w \ne y$, Lemma~\ref{lem:pnuniq} implies that $\pi_{d+1}(y,v_k^2) = d$.

  Now we can extend the walks $\w_1$ and $\w_2$. Set $\w'_1 = (v_0^1,v_1^1,\dotsc,v_k^1,u,x)$ and $\w'_2 = (v_0^2,v_1^2,\dotsc,\allowbreak v_k^2,y,v_k^2)$. We have $\pi_{d+1}(u,v_k^1) = d = \pi_{d+1}(y,v_k^2)$ and $\pi_{d+1}(x,u) = d+1 = \pi_{d+1}(v_k^2,y)$, as required. Furthermore, node~$x$ has neighbour~$u$ for which $\pi_{d+1}(u,x) = d+1$. Suppose that there is a neighbour~$u'$ of $v_k^2$ for which $\pi_{d+1}(u',v_k^2) = d+1$. Now Lemma~\ref{lem:pnexist} implies that $u'$ is the parent of $v_k^2$. But since $v_k^2 \in V_d$, we have also $u' \in V_d$, and hence $\pi_{d+1}(u',v_k^2) \le d$, a contradiction. Similarly, node~$v_k^2$ has neighbour~$y$ for which $\pi_{d+1}(y,v_k^2) = d$, but $\pi_{d+1}(u,x) = d+1$ together with Lemma~\ref{lem:pnexist} implies that there is no neighbour~$y'$ of $x$ for which $\pi_{d+1}(y',x) = d$. This shows that $(\w'_1,\w'_2)$ is a PSW of length~$k+2$ in $G_{d+1}$.
\end{proof}

\vspace{-1.5ex}

\begin{restatable}{lemma}{lempswnewnode}\label{lem:pswnewnode}
  Let $(\w_1,\w_2)$, where $\w_i = (v_0^i,v_1^i,\dotsc,v_k^i)$ for all $i = 1,2$, be a critical PSW in $G_d$. Then we have $v_{k-1}^i \in V_d \setminus V_{d-1}$ for some $i \in \set{1,2}$.
\end{restatable}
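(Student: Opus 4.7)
The plan is to argue by contradiction. Suppose that both $v_{k-1}^1$ and $v_{k-1}^2$ lie in $V_{d-1}$; I will then show that the hypothesis of Lemma~\ref{lem:idedges} is satisfied with $\ell = k-1$, producing a strictly shorter PSW and contradicting criticality.

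First, I would nail down the parities of the last two nodes on each walk. By the observations immediately following Lemma~\ref{lem:pnexist}, both $v_k^1$ and $v_k^2$ are even, and at least one of them, which I may take to be $v_k^1$, lies in $V_d \setminus V_{d-1}$. Because $v_{k-1}^i$ is adjacent to the even node $v_k^i$ in the underlying tree, both $v_{k-1}^1$ and $v_{k-1}^2$ are odd. Now the edge $\set{v_{k-1}^1, v_k^1}$ lies in $E_d \setminus E_{d-1}$ (its $v_k^1$-endpoint is not in $V_{d-1}$) with $v_{k-1}^1 \in V_{d-1}$, so Lemma~\ref{lem:newedge} applied with $d-1$ in the role of $d$ identifies $v_k^1$ as a child of $v_{k-1}^1$ and, using that $v_{k-1}^1$ is odd, pins both outgoing port numbers: $\pi_d(v_{k-1}^1, v_k^1) = \pi_d(v_k^1, v_{k-1}^1) = d$.

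Next I would transport this to the second walk via condition~(W2). It gives $\pi_d(v_k^2, v_{k-1}^2) = \pi_d(v_k^1, v_{k-1}^1) = d$. Since every edge of $G_{d-1}$ carries port numbers at most $d-1$, the edge $\set{v_{k-1}^2, v_k^2}$ cannot belong to $E_{d-1}$, so the assumption $v_{k-1}^2 \in V_{d-1}$ forces $v_k^2 \in V_d \setminus V_{d-1}$. A second invocation of Lemma~\ref{lem:newedge}, again using that $v_{k-1}^2$ is odd, then shows that $v_k^2$ is a child of $v_{k-1}^2$ and $\pi_d(v_{k-1}^2, v_k^2) = d$.

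At this point the hypothesis of Lemma~\ref{lem:idedges} is met at $\ell = k-1$: for $i = 1,2$, the node $v_k^i$ is a child of $v_{k-1}^i$, and the outgoing port numbers agree, $\pi_d(v_{k-1}^1, v_k^1) = d = \pi_d(v_{k-1}^2, v_k^2)$. That lemma then yields that $(\w_1, \w_2)$ is not critical, contradicting the hypothesis, so the assumption that both $v_{k-1}^i$ lie in $V_{d-1}$ must fail. The only delicate step is the parity bookkeeping at the start: it is precisely the oddness of $v_{k-1}^i$ that makes Lemma~\ref{lem:newedge} return the clean value $d$ for the port number in \emph{both} directions (in the even case that lemma only pins the child's outgoing number down to $\set{d-1, d}$, which would be too weak to match (W2) via the argument above); everything after that is routine.
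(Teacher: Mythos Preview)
Your proof is correct and follows essentially the same approach as the paper's: assume both $v_{k-1}^i$ lie in $V_{d-1}$, use Lemma~\ref{lem:newedge} (shifted to $E_d\setminus E_{d-1}$) to force both port numbers on the edge $\{v_{k-1}^i,v_k^i\}$ to equal $d$, transfer this to the other walk via (W2), and then invoke Lemma~\ref{lem:idedges} at $\ell=k-1$ for the contradiction. The one place where you are actually more careful than the paper is in spelling out that $v_{k-1}^i$ is odd before applying Lemma~\ref{lem:newedge}; the paper silently uses this when it asserts that both port numbers equal $d$. One cosmetic inaccuracy: with $\ell=k-1$ the mechanism in Lemma~\ref{lem:idedges} is not ``a strictly shorter PSW'' but rather the first branch of its proof, which shows $(\w_1,\w_2)$ is not a PSW at all; either way the conclusion ``not critical'' gives the contradiction you need.
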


\begin{proof}
  Lemma~\ref{lem:pnexist} implies that $v_k^1$ and $v_k^2$ are even, and for some $i \in \set{1,2}$ node~$v_k^i$ has a parent~$u$ such that $\pi_d(u,v_k^i) = d$. If $v_k^i \in V_{d-1}$, then also $u \in V_{d-1}$ and hence $\pi_d(u,v_k^i) \le d-1$, a contradiction. Therefore $v_k^i \in V_d \setminus V_{d-1}$.

  Suppose that $v_{k-1}^j \in V_{d-1}$ for all $j = 1,2$. Since $v_k^i \in V_d \setminus V_{d-1}$, we have $\set{v_{k-1}^i,v_k^i} \in E_d \setminus E_{d-1}$. Lemma~\ref{lem:newedge} implies that $v_k^i$ is a child of $v_{k-1}^i$ and $\pi_d(v_{k-1}^i,v_k^i) = \pi_d(v_k^i,v_{k-1}^i) = d$. Let $j \in \set{1,2} \setminus \set{i}$. As $\pi_d(v_k^j,v_{k-1}^j) = \pi_d(v_k^i,v_{k-1}^i) = d$, we have $\set{v_{k-1}^j,v_k^j} \in E_d \setminus E_{d-1}$ and thus $v_k^j$ is a child of $v_{k-1}^j$ and $\pi_d(v_{k-1}^j,v_k^j) = \pi_d(v_k^j,v_{k-1}^j) = d$. Now it follows from Lemma~\ref{lem:idedges} that $(\w_1,\w_2)$ is not a critical PSW in $G_d$, a contradiction.
\end{proof}

\vspace{-1.5ex}

\begin{restatable}{lemma}{lempswsym}\label{lem:pswsym}
  Let $(\w_1,\w_2)$, where $\w_i = (v_0^i,v_1^i,\dotsc,v_k^i)$ for all $i = 1,2$, be a PCW in $G_d$. If $(\w_1,\w_2)$ is not a PSW in $G_d$, then for each neighbour~$v_{k+1}^1 \in V_d$ of $v_k^1$ there is a neighbour~$v_{k+1}^2 \in V_d$ of $v_k^2$ such that $\pi_d(v_{k+1}^1,v_k^1) = \pi_d(v_{k+1}^2,v_k^2)$, and vice versa.
\end{restatable}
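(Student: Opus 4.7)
The plan is to reduce the statement to a small cardinality computation. First I would unpack the hypothesis: condition~(W3) for a PSW asserts the existence of a neighbour of $v_k^1$ whose outgoing port number to $v_k^1$ is not matched at $v_k^2$, so the negation ``$(\w_1,\w_2)$ is not a PSW'' is literally the forward half of the conclusion. The real task is therefore to obtain the ``vice versa'', i.e.\ the reverse half.

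Writing $P_d(v) = \set{\pi_d(u,v)}{\set{v,u}\in E_d}$ for the set of port numbers along the outgoing edges of the neighbours of $v$, the forward half reads $P_d(v_k^1) \subseteq P_d(v_k^2)$. By Lemma~\ref{lem:pnuniq} these port numbers are pairwise distinct, so $|P_d(v)| = \deg_{G_d}(v)$. The plan is then to show that $\deg_{G_d}(v_k^1) = \deg_{G_d}(v_k^2) = d$: with two finite sets of the same size, the inclusion $P_d(v_k^1) \subseteq P_d(v_k^2)$ forces equality, and the reverse half of the conclusion is immediate.

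For the degree equality I would invoke Lemma~\ref{lem:degree}, by which every node of $G_d$ has degree $1$ or $d$, with degree $1$ occurring exactly at the leaves, namely the sequences of length~$2d$. Both walks start at sequences of length $1$ (namely $((1,0))$ and $((2,1))$), and each walk step changes the sequence length by $\pm 1$ (a step to a child adds one pair, a step to the parent removes one). Hence after the $k \le 2d-3$ steps permitted in a PCW both endpoints have sequence length at most $1 + (2d-3) = 2d-2 < 2d$, so neither is a leaf and both have degree $d$, which closes the argument.

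I do not anticipate a genuine obstacle here: the only substantive step is the length bound, and the upper bound $k \le 2d-3$ baked into the PCW definition has been chosen precisely to keep the endpoints strictly short of leaves, which is exactly what is needed to make the degrees (and therefore the incoming port-number sets) line up.
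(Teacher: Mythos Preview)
Your proof is correct and takes a somewhat different route from the paper's. Both arguments dispatch the forward half identically: it is the literal negation of~(W3). For the reverse half, the paper argues structurally via Lemma~\ref{lem:pnexist}: assuming some port number at $v_k^2$ has no match at $v_k^1$, it deduces that both endpoints are even and the offending port lies in $\{d-1,d\}$, then exhibits the complementary element of $\{d-1,d\}$ as an unmatched port at $v_k^1$, contradicting the forward half. You instead reduce to a cardinality count: $P_d(v_k^1) \subseteq P_d(v_k^2)$ together with $|P_d(v_k^1)| = |P_d(v_k^2)| = d$ (via Lemma~\ref{lem:pnuniq} and the length bound $k \le 2d-3$ keeping both endpoints away from the leaves) forces equality of the two port sets. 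Your argument is more economical, using only Lemmas~\ref{lem:degree} and~\ref{lem:pnuniq} and avoiding the even/odd casework; the paper's route, by contrast, surfaces the $\{d-1,d\}$ swap at even nodes explicitly, which is the same structural observation that drives Lemma~\ref{lem:extendpsw}. Note that the paper's invocation of Lemma~\ref{lem:pnexist} also tacitly needs the endpoints to be non-leaves, so your length-bound step is implicitly present there too.
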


\begin{proof}
  Since $(\w_1,\w_2)$ is not a PSW, condition (W3) does not hold. This is equivalent to the first claim. For the second claim, assume that $v_{k+1}^2$ is a neighbour of $v_k^2$. Suppose that there is no neighbour~$v_{k+1}^1$ of $v_k^1$ such that $\pi_d(v_{k+1}^1,v_k^1) = \pi_d(v_{k+1}^2,v_k^2)$. Now it follows from Lemma~\ref{lem:degree} and Lemma~\ref{lem:pnexist} that $v_k^1$ and $v_k^2$ are even and $\pi_d(v_{k+1}^2,v_k^2) \in \set{d-1,d}$. We also obtain from Lemma~\ref{lem:pnexist} that there is a neighbour~$u$ of $v_k^1$ for which $\pi_d(u,v_k^1) \in \set{d-1,d} \setminus \set{\pi_d(v_{k+1}^2,v_k^2)}$. Now $u$ is a neighbour of $v_k^1$ such that there is no neighbour~$w$ of $v_k^2$ for which $\pi_d(u,v_k^1) = \pi_d(w,v_k^2)$, a contradiction.
\end{proof}

Now we are ready to prove the following lemma, which is the main ingredient of the proof of Theorem~\ref{thm:simulation}. The underlying idea is that the generalised port numbers along the walks have to grow slowly. Put otherwise, each prefix of a critical PSW must be contained in a subgraph~$G_d$ for a sufficiently small value of $d$.

\begin{lemma}\label{lem:prefixpsw}
  Let $(\w_1,\w_2)$, where $\w_i = (v_0^i,v_1^i,\dotsc,v_k^i)$ for all $i = 1,2$, be a critical PSW in $G_d$. Then $(\w'_1,\w'_2)$, where $\w'_i = (v_0^i,v_1^i,\dotsc,v_{k-2}^i)$ for all $i = 1,2$, is a PSW in $G_{d-1}$.
\end{lemma}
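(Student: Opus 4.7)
The plan is to verify separately that (a) the truncated walks $(\w'_1,\w'_2)$ stay inside $V_{d-1}$ and (b) they satisfy conditions (W1)--(W3) in $G_{d-1}$.

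For (a), I would prove by induction on $j\in\{0,1,\ldots,k-2\}$ that $v_j^1,v_j^2\in V_{d-1}$. The base case $j=0$ is immediate from rule (G2), since $((1,0)),((2,1))\in V_2\subseteq V_{d-1}$. For the inductive step, suppose toward a contradiction that $j\le k-2$ is the smallest index for which some $v_j^i\notin V_{d-1}$, and (by symmetry) take $v_j^1\in V_d\setminus V_{d-1}$. Since $v_{j-1}^1\in V_{d-1}$ by minimality, Lemma~\ref{lem:newedge} forces $v_j^1$ to be a new child of $v_{j-1}^1$ with $\pi_d(v_{j-1}^1,v_j^1)=d$, and pins down $\pi_d(v_j^1,v_{j-1}^1)$ as $d$ if $v_{j-1}^1$ is odd and otherwise as one of $d-1,d-2$ depending on $b_1^{(v_{j-1}^1)}$. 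The goal of the case analysis is to deduce that $v_j^2$ must likewise lie in $V_d\setminus V_{d-1}$ as a new child of $v_{j-1}^2$ with parent-side port $d$; then Lemma~\ref{lem:idedges} at $\ell=j-1$ contradicts the criticality of $(\w_1,\w_2)$. In the odd case Lemma~\ref{lem:pnexist} closes this immediately, since no $V_{d-1}$-neighbour of the odd $V_{d-1}$-node $v_{j-1}^2$ carries port-back $d$.

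For (b), (W1) transfers verbatim. (W2) transfers because, once (a) is established, every edge of $(\w'_1,\w'_2)$ has both endpoints in $V_{d-1}$ and so carries the same generalised port numbers in $G_{d-1}$ as in $G_d$. For (W3), I would apply Lemma~\ref{lem:pswnewnode} to pick $i$ with $v_{k-1}^i\in V_d\setminus V_{d-1}$, say $i=1$. By (a) the node $v_{k-2}^1$ lies in $V_{d-1}$, so Lemma~\ref{lem:newedge} tells us $v_{k-1}^1$ is a new child of the even node $v_{k-2}^1$ and pins down $\pi_d(v_{k-1}^1,v_{k-2}^1)\in\{d-2,d-1\}$ from $b_1^{(v_{k-2}^1)}$. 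Feeding this through (W2) at step $k-1$ and comparing the incoming port sets of the two even $V_{d-1}$-nodes $v_{k-2}^1,v_{k-2}^2$ as classified by Lemma~\ref{lem:pnexist} (namely, either $\{0,\ldots,d-2\}$ or $\{0,\ldots,d-3,d-1\}$), one concludes that the two sets must differ, so there is a port realised at $v_{k-2}^1$ in $G_{d-1}$ but absent at $v_{k-2}^2$ in $G_{d-1}$. The corresponding neighbour of $v_{k-2}^1$ witnesses (W3).

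The main obstacle is the even sub-case of the induction in (a). There the port-back forced on walk 2 by (W2) lies in $\{d-2,d-1\}$, and such a port-back can in principle be realised by a neighbour of $v_{j-1}^2$ that still lies inside $V_{d-1}$ (for instance, by its parent when $b_1^{(v_{j-1}^2)}=d-1$), so the naive application of Lemma~\ref{lem:idedges} does not immediately close. The heart of the argument is to play off $b_1^{(v_{j-1}^1)}$ against $b_1^{(v_{j-1}^2)}$ using Lemma~\ref{lem:pnexist}'s two possible incoming port sets at even $V_{d-1}$-nodes, showing that any asymmetric escape of walk~2 into $V_{d-1}$ via a parent step is incompatible with both (W2) and the criticality of the PSW.
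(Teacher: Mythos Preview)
Your decomposition into (a) and (b) mirrors the paper's structure, and your handling of the odd sub-case of (a) as well as most of (b) is essentially what the paper does. But the even sub-case of (a)---which you rightly flag as the crux---is not closed, and the sketch you give cannot close it as stated.

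Concretely, suppose $v_{j-1}^1,v_{j-1}^2\in V_{d-1}$ are even, $v_j^1\in V_d\setminus V_{d-1}$, and $v_j^2\in V_{d-1}$ (the asymmetric case). Your port-set analysis via Lemma~\ref{lem:pnexist} does show that the $G_{d-1}$-incoming-port sets at $v_{j-1}^1$ and $v_{j-1}^2$ differ, so the prefix of length $j-1$ is already a PSW in $G_{d-1}$. However, this by itself says nothing about $G_d$: in $G_d$ both even $V_{d-1}$-nodes have the full incoming set $\{0,\dots,d-1\}$, so no (W3)-style mismatch arises there, and you get no direct contradiction with criticality. The paper closes this case by invoking Lemma~\ref{lem:extendpsw}: from the PSW of length $j-1\le k-3$ in $G_{d-1}$ one manufactures a PSW of length $j+1\le k-1$ in $G_d$, and \emph{that} contradicts criticality. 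You never mention Lemma~\ref{lem:extendpsw}, and I do not see how ``playing off the $b_1$-values'' together with (W2) alone produces a shorter PSW in $G_d$; the asymmetric step to the parent on walk~2 is perfectly compatible with (W2), and no local argument forces a repeat of an earlier configuration. Note also that the asymmetric escape is not always ``via a parent step'': in your sub-case with port-back $d-2$ and $b_1^{(v_{j-1}^2)}<d-2$, the matching $G_{d-1}$-neighbour of $v_{j-1}^2$ is a child.

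A smaller point: in (b)(W3) you still need to dispose of the case where \emph{both} $v_{k-1}^1$ and $v_{k-1}^2$ lie in $V_d\setminus V_{d-1}$. Then the two $G_{d-1}$-port-sets at $v_{k-2}^1,v_{k-2}^2$ coincide rather than differ; the paper handles this with Lemma~\ref{lem:idedges} at $\ell=k-2$ (both are new children with parent-side port $d$), and you should say so explicitly.
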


\begin{proof}
  First, suppose that $\set{v_\ell^i,v_{\ell+1}^i} \in E_{d-1}$ for all $i = 1,2$ and $\ell = 0,1,\dotsc,k-3$ but that $(\w'_1,\w'_2)$ is not a PSW in $G_{d-1}$. Assume that $\set{v_{k-2}^i,v_{k-1}^i} \in E_{d-1}$ for some $i \in \set{1,2}$ and let $j \in \set{1,2} \setminus \set{i}$. It follows from Lemma~\ref{lem:pswsym} that there is a neighbour~$u \in V_{d-1}$ of $v_{k-2}^j$ such that $\pi_{d-1}(u,v_{k-2}^j) = \pi_{d-1}(v_{k-1}^i,v_{k-2}^i)$. Now Lemma~\ref{lem:pnuniq} implies that $u = v_{k-1}^j$ and hence we have $v_{k-1}^i,v_{k-1}^j \in V_{d-1}$. Then we can use Lemma~\ref{lem:pswnewnode} to obtain that $(\w_1,\w_2)$ is not a critical PSW in $G_d$, a contradiction.

  Let us then assume that $\set{v_{k-2}^i,v_{k-1}^i} \in E_d \setminus E_{d-1}$ for all $i = 1,2$. As $v_{k-2}^i \in V_{d-1}$ for all $i = 1,2$, Lemma~\ref{lem:newedge} implies that $v_{k-1}^i$ is a child of $v_{k-2}^i$ and $\pi_d(v_{k-2}^1,v_{k-1}^1) = d = \pi_d(v_{k-2}^2,v_{k-1}^2)$ for all $i = 1,2$. But now we can apply Lemma~\ref{lem:idedges} to see that $(\w_1,\w_2)$ is not a critical PSW in $G_d$, a contradiction. We have now shown that if $\set{v_\ell^i,v_{\ell+1}^i} \in E_{d-1}$ for all $i = 1,2$ and $\ell = 0,1,\dotsc,k-3$, then $(\w'_1,\w'_2)$ is a PSW in $G_{d-1}$.

  Then, suppose that $\set{v_\ell^i,v_{\ell+1}^i} \in E_d \setminus E_{d-1}$ for some $i \in \set{1,2}$ and $\ell \in \set{0,1,\dotsc,k-3}$. Let $m$ be the smallest value of $\ell$ for which this holds. Let $j \in \set{1,2} \setminus \set{i}$. If $m$ is even, then the node~$v_m^i \in V_{d-1}$ is odd, and by Lemma~\ref{lem:newedge} we have that $\pi_d(v_m^i,v_{m+1}^i) = \pi_d(v_{m+1}^i,v_m^i) = d$ and that $v_{m+1}^i$ is a child of $v_m^i$. Since $\pi_d(v_{m+1}^j,v_m^j) = \pi_d(v_{m+1}^i,v_m^i) = d$, we obtain $\set{v_m^j,v_{m+1}^j} \in E_d \setminus E_{d-1}$. As $v_m^j \in V_{d-1}$ is odd, Lemma~\ref{lem:newedge} yields that $\pi_d(v_m^j,v_{m+1}^j) = \pi_d(v_{m+1}^j,v_m^j) = d$ and that $v_{m+1}^j$ is a child of $v_m^j$. Lemma~\ref{lem:idedges} then implies that $(\w_1,\w_2)$ is not a critical PSW in $G_d$, a contradiction.

  To complete the proof, assume that $m$ is odd. Recall that $\set{v_m^i,v_{m+1}^i} \in E_d \setminus E_{d-1}$. If also $\set{v_m^j,v_{m+1}^j} \in E_d \setminus E_{d-1}$, we can again use Lemma~\ref{lem:newedge} to get that $v_{m+1}^i$ and $v_{m+1}^j$ are children of $v_m^i$ and $v_m^j$, respectively, and that $\pi_d(v_m^i,v_{m+1}^i) = d = \pi_d(v_m^j,v_{m+1}^j)$. Now Lemma~\ref{lem:idedges} yields a contradiction. If $\set{v_m^j,v_{m+1}^j} \in E_{d-1}$, let $\w''_\ell = (v_0^\ell,v_1^\ell,\dotsc,v_m^\ell)$ for all $\ell = 1,2$. The pair $(\w''_1,\w''_2)$ is a PSW in $G_{d-1}$, because otherwise by using a similar argument as above we would obtain that $\set{v_m^i,v_{m+1}^i} \in E_{d-1}$, a contradiction. But now we can use Lemma~\ref{lem:extendpsw} to get a PSW of length $m+2 \le (k-3)+2 = k-1$ in $G_d$, which contradicts the criticality of $(\w_1,\w_2)$.
\end{proof}

Having proved Lemma~\ref{lem:prefixpsw}, the following result now follows by induction.

\begin{restatable}{lemma}{lempswlen}\label{lem:pswlen}
  Let $(\w_1,\w_2)$ be a PSW of length~$k$ in $G_d$. Then $k \ge 2d-3$.
\end{restatable}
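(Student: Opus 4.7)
The plan is to proceed by induction on $d$. For the base case $d = 2$, where $2d-3 = 1$, it suffices to rule out a PSW of length $0$: both initial nodes $v_0^1 = ((1,0))$ and $v_0^2 = ((2,1))$ are odd, so by Lemma~\ref{lem:pnexist} the values $\pi_d(u, v_0^i)$ taken over neighbours $u$ of $v_0^i$ exhaust $\{1, 2, \ldots, d\}$ for both $i$, and condition (W3) cannot be satisfied.

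For the inductive step, given any PSW in $G_d$, I would first replace it with a critical PSW of minimum length $k^*$; it then suffices to show $k^* \ge 2d-3$. The driving step is Lemma~\ref{lem:prefixpsw}: assuming $k^* \ge 2$, the length-$(k^*-2)$ prefix is a PSW in $G_{d-1}$, so the induction hypothesis gives $k^*-2 \ge 2(d-1)-3 = 2d-5$, yielding $k^* \ge 2d-3$ as required.

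The main obstacle is that Lemma~\ref{lem:prefixpsw} implicitly needs $k^* \ge 2$, so the cases $k^* = 0$ and $k^* = 1$ must be handled directly when $d \ge 3$. The first is ruled out exactly as in the base case. For $k^* = 1$, each $v_1^i$ is either $\emptyset$ or an even depth-$2$ node, and by Lemma~\ref{lem:pnexist} the values $\pi_d(u, v_1^i)$ over its neighbours $u$ exhaust $\{0, 1, \ldots, d-1\}$, except when $v_1^i$ is a child of $v_0^i$ whose type has first coordinate $d$, in which case they exhaust $\{0, 1, \ldots, d-2, d\}$. For (W3) to hold these two collections must differ, so WLOG $v_1^1$ falls into the second case. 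Inspecting rule (G3) shows the unique such child of $v_0^1$ is $((1,0),(d,d))$, whose outgoing port to the parent is also $d$. By (W2) we then need $\pi_d(v_1^2, v_0^2) = d$, and a similar inspection of rule (G3) applied to $v_0^2$ shows that the only neighbour of $v_0^2$ realising this back port is $((2,1),(d,d))$---which also falls in the second case. Hence both collections coincide, contradicting (W3). This closes the induction.
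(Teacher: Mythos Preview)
Your proposal is correct and follows the same inductive scheme as the paper: pass to a critical PSW, apply Lemma~\ref{lem:prefixpsw} to drop two steps, and invoke the inductive hypothesis in $G_{d-1}$. The only substantive difference is in how the small lengths are dispatched. The paper observes once and for all (via Lemmas~\ref{lem:degree} and~\ref{lem:pnexist}) that the terminal nodes $v_k^i$ of any PSW are even, hence $k$ is odd; this immediately gives $k\ge 1$ for every $d$ and in particular settles the base case $d=2$ without the port-set computation you do. On the other hand, you are more explicit than the paper about the boundary case $k^*=1$ before invoking Lemma~\ref{lem:prefixpsw}, and your direct analysis of the depth-two children $((1,0),(d,d))$ and $((2,1),(d,d))$ handles it cleanly. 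One small remark: your ``WLOG $v_1^1$ falls into the second case'' is not a priori symmetric since $v_0^1$ and $v_0^2$ are different nodes, but the argument does go through verbatim with the roles swapped (both $((1,0))$ and $((2,1))$ have $(d,d)$ as their unique child with back-port $d$), so the conclusion stands.
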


\begin{proof}
  We use induction on $d$. Let $\w_i = (v_0^i,v_1^i,\dotsc,v_k^i)$ for all $i = 1,2$. It follows from Lemma~\ref{lem:degree} and Lemma~\ref{lem:pnexist} that $v_k^i$ is even for all $i = 1,2$, and thus $k$ is odd. Hence we have $k \ge 1$. If $d = 2$, we have shown that $k \ge 2d-3$.

  For the inductive step, suppose that the claim holds for $d = q$ and that $(\w_1,\w_2)$ is a PSW of length $k$ in $G_{q+1}$. Now there is a critical PSW $(\ww_1,\ww_2)$ of length $\ell \le k$ in $G_{q+1}$, where $\ww_i = (u_0^i,u_1^i,\dotsc,u_\ell^i)$ for all $i = 1,2$. Lemma~\ref{lem:prefixpsw} implies that $(\ww'_1,\ww'_2)$, where $\ww'_i = (u_0^i,u_1^i,\dotsc,u_{\ell-2}^i)$ for all $i = 1,2$, is a PSW of length $\ell-2$ in $G_q$. By the inductive hypothesis we obtain $\ell-2 \ge 2q-3$. It follows that $k \ge \ell \ge 2q-1 = 2(q+1)-3$. Hence we have shown that the claim holds for $d = q+1$.
\end{proof}

Now we just need to show that the lower bound for the length of PSWs implies bisimilarity up to the respective distance, and we are mostly done.

\begin{restatable}{lemma}{lempswtobisim}\label{lem:pswtobisim}
  We have $((1,0)) \nbsmsv{2d-3} ((2,1))$, that is, the nodes $((1,0))$ and $((2,1))$ of $G_d$ are $(2d-3)$-$\aSV$-bisimilar.
\end{restatable}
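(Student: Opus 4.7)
The plan is to prove the strengthened auxiliary claim, by downward induction on $k$, that for every $k \in \set{0,1,\dotsc,2d-3}$ and every PCW $(\w_1,\w_2)$ of length~$k$ in $G_d$ with endpoints $v_k^1$ and $v_k^2$, one has $v_k^1 \nbsmsv{(2d-3)-k} v_k^2$ (taking $G_d$, the null input, and $p_d$ on both sides of the relation). The lemma is then the $k=0$ instance, since the pair of one-vertex walks at $((1,0))$ and $((2,1))$ is trivially a PCW of length~$0$ by (W1) (and (W2) holds vacuously).

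For the base case $k = 2d-3$ only $0$-$\aSV$-bisimilarity is required, i.e.\ equal degrees and matching (null) local inputs. Any walk of length at most $2d-3$ from a depth-$1$ vertex stays at depth at most $2d-2 < 2d$, so by Lemma~\ref{lem:degree} and the construction every endpoint encountered has degree exactly~$d$; this settles~(B1) at level~$0$.

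For the inductive step with $k \le 2d-4$, condition~(B1) holds by the same depth bound. To verify~(B2), I would pick an arbitrary neighbour~$u$ of $v_k^1$ and chain the key earlier lemmas: Lemma~\ref{lem:pswlen} tells us any PSW has length at least $2d-3$, so our length-$k$ PCW is not a PSW, and Lemma~\ref{lem:pswsym} then supplies a neighbour~$w$ of $v_k^2$ with $\pi_d(u,v_k^1) = \pi_d(w,v_k^2)$. Extending both walks by $u$ and $w$ yields a PCW of length $k+1 \le 2d-3$, so the inductive hypothesis gives $u \nbsmsv{(2d-4)-k} w$. The equality of outgoing port numbers is exactly the port-condition that (B2) requires of $p_d$, while the incoming numbers $b$ and $c$ in the definition are free. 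Condition~(B3) is symmetric, using the ``vice versa'' direction of Lemma~\ref{lem:pswsym}.

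All the substantive work has already been done in Lemma~\ref{lem:pswlen}; the remaining argument is essentially bookkeeping that cashes out the PSW length bound as bisimilarity. The only minor check is that every endpoint encountered has degree~$d$, so that (B1) is never a real obstacle, and this is immediate from the depth bound on walks of length at most $2d-3$.
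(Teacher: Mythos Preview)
Your proof is correct. The approach differs from the paper's, though both rest on Lemma~\ref{lem:pswlen}.

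The paper argues by contrapositive construction: it takes the maximal $k$ with $((1,0)) \nbsmsv{k} ((2,1))$ and, step by step, follows the witnesses to the failure of $(k+1)$-bisimilarity to produce a PSW of length~$k$; Lemma~\ref{lem:pswlen} then forces $k \ge 2d-3$. This is the standard ``bisimulation game'' reading (Spoiler's winning play is the PSW).

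You instead strengthen the statement to all PCWs and run a clean downward induction: a PCW of length $k \le 2d-4$ cannot be a PSW by Lemma~\ref{lem:pswlen}, so Lemma~\ref{lem:pswsym} supplies the matching neighbour, the extended pair is again a PCW, and the inductive hypothesis closes (B2)/(B3). This is in effect a direct description of Duplicator's strategy rather than an analysis of Spoiler's. Your route is arguably tidier---it avoids the case analysis at the end of the paper's proof (the swap of $i$ and $j$ via Lemmas~\ref{lem:degree} and~\ref{lem:pnexist}) because you invoke the already-symmetrised Lemma~\ref{lem:pswsym}---at the cost of quantifying over all PCWs rather than just the one pair of starting nodes. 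Both arguments are short once Lemma~\ref{lem:pswlen} is in hand.
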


\begin{proof}
  If we have $((1,0)) \nbsmsv{k} ((2,1))$ for arbitrarily large $k$, the claim is clearly true. Otherwise, let $k$ be the largest integer for which we have $((1,0)) \nbsmsv{k} ((2,1))$. We will show that $k \ge 2d-3$.

  Let $v_0^1 = ((1,0))$ and $v_0^2 = ((2,1))$. Suppose then that $\ell \in \set{0,1,\dotsc,k-1}$ and that $v_\ell^1$ and $v_\ell^2$ have been defined. Furthermore, suppose that $k-\ell$ is the largest integer~$m$ for which $v_\ell^1 \nbsmsv{m} v_\ell^2$ holds. If for each neighbour~$u$ of $v_\ell^1$ there was a neighbour~$w$ of $v_\ell^2$, and vice versa, such that $u \nbsmsv{k-\ell} w$ and $\pi_d(u,v_\ell^1) = \pi_d(w,v_\ell^2)$, then by Definition~\ref{def:nbisimsv} we would have $v_\ell^1 \nbsmsv{k-\ell+1} v_\ell^2$, a contradiction. Thus for some $i \in \set{1,2}$ and $j \in \set{1,2} \setminus \set{i}$ there is a neighbour~$u$ of $v_\ell^i$ such that there is no neighbour~$w$ of $v_\ell^j$ for which the given condition holds. However, since $v_\ell^i \nbsmsv{k-\ell} v_\ell^j$, we can choose neighbour~$w$ so that $u \nbsmsv{k-\ell-1} w$ and $\pi_d(u,v_\ell^i) = \pi_d(w,v_\ell^j)$. Now we can define $v_{\ell+1}^i = u$ and $v_{\ell+1}^j = w$. We have shown that $k-\ell-1 = k-(\ell+1)$ is the largest integer~$m$ for which $v_{\ell+1}^i \nbsmsv{m} v_{\ell+1}^j$ holds.

  The above recursive definition yields a pair~$(\w_1,\w_2)$ of walks, where $\w_i = (v_0^i,v_1^i,\dotsc,v_k^i)$ for all $i = 1,2$. Clearly conditions (W1) and (W2) hold. Additionally, we know that $k-k = 0$ is the largest integer~$m$ for which we have $v_k^1 \nbsmsv{m} v_k^2$. However, if $k \le 2d-3$, then for each neighbour~$u$ of $v_k^1$ and $w$ of $v_k^2$ we have $\deg(u) = \deg(w)$ and hence $u \nbsmsv{0} w$. It follows that for some $i \in \set{1,2}$ and $j \in \set{1,2} \setminus \set{i}$ there is a neighbour~$u$ of $v_k^i$ such that there is no neighbour~$w$ of $v_k^j$ for which $\pi_d(u,v_k^i) = \pi_d(w,v_k^j)$. If $i = 1$ and $j = 2$, this is equivalent to condition~(W3). Otherwise, we use Lemma~\ref{lem:degree} and Lemma~\ref{lem:pnexist} to swap the roles of $i$ and $j$ in a similar manner as in the proof of Lemma~\ref{lem:pswsym}.

  In conclusion, we have shown that if $k \le 2d-3$, then $(\w_1,\w_2)$ is a PSW of length~$k$ in $G_d$. Now Lemma~\ref{lem:pswlen} implies that $k = 2d-3$. If $k > 2d-3$, the claim is trivially true.
\end{proof}

\begin{remark}
  Lemma~\ref{lem:pswtobisim} can also be viewed from a game-theoretic perspective. When considering a game played by \emph{Spoiler} and $\emph{Duplicator}$ starting from the nodes $((1,0))$ and $((2,1))$, the pair of sequences consisting of the nodes chosen by the players is a PSW. Then, the lower bound on the length of PSWs implies that Duplicator has a \emph{winning strategy} in the $(2d-3)$-round bisimulation game. For more details on bisimulation games, see \textcite{blackburn07handbook}.
\end{remark}

To prove Theorem~\ref{thm:simulation}, we want the root node~$\emptyset$ to receive the same messages from its neighbours $((1,0))$ and $((2,1))$. Lemma~\ref{lem:pswtobisim} shows that they are $(2d-3)$-$\aSV$-bisimilar, but this is not enough: they also need to have identical outgoing port numbers towards node~$\emptyset$. We will now define a port numbering of $G_d$ based on the generalised port numbering~$p_d$. Let $f \colon \set{0,1,\dotsc,d} \to [d]$ be a function such that $f(0) = 1$ and $f(i) = i$ for $i \ge 1$. If $p_d(v,i) = (u,j)$ for some nodes $v,u$ and port numbers $i,j$, we define $p'_d(v,(f(i)) = (u,f(j))$. Due to the fact that in rule~(G3) of the definition of $G_d$ we used $b_2^+$ instead of $b_2$, no node has both $0$ and $1$ as port numbers in $p_d$. It follows that $p'_d$ is a bijection from the set of input ports to the set of output ports, and the set of outgoing as well as incoming port numbers for each node~$v$ is $\set{1,2,\dotsc,\deg(v)}$. Observe that $p'_d(((1,0)),1) = (\emptyset,1)$ and $p'_d(((2,1)),1) = (\emptyset,2)$. Now we can apply Lemma~\ref{lem:pnchangebsm} to see that the $(2d-3)$-$\aSV$-bisimilarity still holds, that is, we have $(G_d,((1,0)),p'_d) \nbsmsv{2d-3} (G_d,((2,1)),p'_d)$.

Let $\aA \in \aSV$ be an arbitrary algorithm and $\Delta \ge 2$. Let $G = G_\Delta$, $p = p'_\Delta$, $v = \emptyset$, $u = ((1,0))$ and $w = ((2,1))$. Consider the execution of $\aA$ in $(G,p)$. Lemma~\ref{lem:nbsmsvstate} implies that the state of $\aA$ in the nodes $u$ and $w$ is identical in each round $r = 0,1,\dotsc,2\Delta-3$. Furthermore, we have $\pi(u,v) = 1 = \pi(w,v)$. It follows that $u$ and $w$ send the same message to node~$v$ in each round $r+1 = 1,2,\dotsc,2\Delta-2$. This concludes the proof of Theorem~\ref{thm:simulation}.

\begin{remark}
  We could as well show that the nodes $((1,0))$ and $((2,1))$ are $(2d-3)$-bisimilar with respect to the class $\aMV$ of algorithms, with only minor changes to the proof of Lemma~\ref{lem:pswtobisim}. However, this would not make any difference in the end, since we need to consider an algorithm in $\aSV$ for the root node to lose the multiplicities of messages it receives from its neighbours.
\end{remark}

\section{Separation by a Graph Problem}\label{sec:graphlb}

Theorem~\ref{thm:simulation} shows that the simulation algorithm is optimal in a certain sense. However, since we are interested in graph problems, we want to separate the classes $\aSV$ and $\aMV$ by one. The following theorem states that we can do this, and the lower bound in still linear in $\Delta$.

\begin{thm2}
  There is a graph problem~$\Pi$ that can be solved in one round by an algorithm in $\sMV$ but that requires at least time $T$, where $T(n,\Delta) \ge \Delta$ for all $\Delta \ge 2$, when solved by an algorithm in $\sSV$.
\end{thm2}

Let us first define formally the graph problem~$\Pi$. We will be working with graphs where each node is given as a local input one of three colours: black ($\cB$), white ($\cW$) or grey ($\cG$). For each graph~$(G,f)$ with local input from the set $\set{\cB,\cW,\cG}$, the set~$\Pi(G,f)$ of solutions consists of mappings $S \colon V \to \set{\cB,\cW,\cG}$ such that for each $v \in V$, $S(v)$ is one of the local inputs having the highest multiplicity among the neighbours of $v$. For example, if node~$v$ has four neighbours of colour~$\cB$, four neighbours of colour~$\cW$ and two neighbours of colour~$\cG$, then for each solution~$S$ we have $S(v) = \cB$ or $S(v) = \cW$.

There is an algorithm in $\sMV$---and, in fact, in $\sMB$---that solves problem~$\Pi$ in only one communication round: Each node broadcasts its own colour to all its neighbours. Then, each node counts the multiplicity of each message it received and outputs the one with the highest multiplicity. Showing that this cannot be solved by any algorithm in $\sSV$ in less than $\Delta$ communication rounds will require somewhat more work. Luckily, we can handle the most tricky part of the proof by making use of the proof of Theorem~\ref{thm:simulation} in a black-box manner.

We start by defining for each $d = 2,3,\dotsc$ two graphs, $H_{\cB,d} = (V_{\cB,d},E_{\cB,d})$ and $H_{\cW,d} = (V_{\cW,d},E_{\cW,d})$. The constructions can be seen as extensions of the graph~$G_d$ defined earlier, but now each node is coloured with one of the three colours: black ($\cB$), white ($\cW$) or grey ($\cG$). Colours $\cB$ and $\cW$ can be thought of as complements of each other; we write $\overline{\cB} = \cW$ and $\overline{\cW} = \cB$. Again, we define $V_{\cB,d}$ recursively:

\begin{enumerate}[label=(H\arabic*)]
  \item $\emptyset \in V_{\cB,d}.$
  \item $((1,0,\cB)), ((2,1,\cB)), ((3,2,\cB)), ((4,3,\cB)),\dotsc,((d,d-1,\cB)) \in V_{\cB,d}.$
  \item $((2,1,\cW)), ((3,2,\cW)), ((4,3,\cW)),\dotsc,((d,d-1,\cW)) \in V_{\cB,d}.$
  \item If $(a_1,a_2,\dotsc,a_i) \in V_{\cB,d}$, where $i$ is odd and $i < 2d$, then $(a_1,a_2,\dotsc,a_{i+1}^j) \in V_{\cB,d}$ for all $j = 1,2,\dotsc,d-1$, where $a_{i+1}^j = (c_1^j, c_2^j, \cG)$ is defined as follows. Let $(b_1,b_2,C) = a_i$, where $C \in \set{\cB,\cW}$, and $b_2^+ = 1$ if $b_2 = 0$, $b_2^+ = b_2$ otherwise. Define
    \begin{align*}
      c_1^j &= \min(\set{1,2,\dotsc,d} \setminus \set{b_2^+,c_1^1,c_1^2,\dotsc,c_1^{j-1}}),\\
      c_2^j &= \min(\set{1,2,\dotsc,d} \setminus \set{b_1,c_2^1,c_2^2,\dotsc,c_2^{j-1}}).
    \end{align*}
  \item If $(a_1,a_2,\dotsc,a_i) \in V_{\cB,d}$, where $i$ is even and $i < 2d$, then $(a_1,a_2,\dotsc,a_{i+1}^j) \in V_{\cB,d}$ for all $j = 1,2,\dotsc,d-1$, where $a_{i+1}^j = (c_1^j, c_2^j,C)$ is defined as follows. Let $(d_1,d_2,C) = a_{i-1}$, where $C \in \set{\cB,\cW}$, and $(b_1,b_2,\cG) = a_i$. Define
    \begin{align*}
      c_1^j &= \min(\set{1,2,\dotsc,d} \setminus \set{b_2,c_1^1,c_1^2,\dotsc,c_1^{j-1}}),\\
      c_2^j &= \min(\set{0,1,\dotsc,d-1} \setminus \set{b_1,c_2^1,c_2^2,\dotsc,c_2^{j-1}}).
    \end{align*}
  \item If $(a_1,a_2,\dotsc,a_i) \in V_{\cB,d}$, where $i$ is even and $i < 2d$, then $(a_1,a_2,\dotsc,a_{i+1}^j) \in V_{\cB,d}$ for all $j = 1,2,\dotsc,d-1$, where $a_{i+1}^j = (c_1^j, c_2^j,\overline{C})$ is defined as follows. Let $(d_1,d_2,C) = a_{i-1}$, where $C \in \set{\cB,\cW}$. Define
    \begin{align*}
      c_1^j &= \min(\set{2,3,\dotsc,d} \setminus \set{c_1^1,c_1^2,\dotsc,c_1^{j-1}}),\\
      c_2^j &= \min(\set{1,2,\dotsc,d-1} \setminus \set{c_2^1,c_2^2,\dotsc,c_2^{j-1}}).
    \end{align*}
\end{enumerate}
The set $E_{\cB,d}$ of edges consists of all pairs $\set{v,u}$, where $v = (a_1,a_2,\dotsc,a_i) \in V_{\cB,d}$ and $u = (a_1,a_2,\dotsc,a_i,a_{i+1}) \in V_{\cB,d}$ for some $i \in \set{0,1,\dotsc}$. The sets $V_{\cW,d}$ and $E_{\cW,d}$ are given by the same definition by replacing every occurrence of $\cB$ with $\cW$ and vice versa. See Figures~\ref{fig:graphhb} and~\ref{fig:graphhw} for illustrations. By rearranging the branches of the trees, we observe that actually the only difference between $H_{\cB,d}$ and $H_{\cW,d}$ is the colours in the branch that starts with the node~$((1,0,C))$.

\begin{figure}[p]
  \captionsetup{skip=2cm}
  \centering
  \includegraphics[page=2,scale=0.85]{figures}
  \caption{A part of the graph~$H_{\cB,4}$. The node in the centre is node~$\emptyset$. The numbers pictured are outgoing port numbers.}
  \label{fig:graphhb}
\end{figure}
\begin{figure}[p]
  \captionsetup{skip=2cm}
  \centering
  \includegraphics[page=3,scale=0.85]{figures}
  \caption{A part of the graph~$H_{\cW,4}$. The node in the centre is node~$\emptyset$. The numbers pictured are outgoing port numbers.}
  \label{fig:graphhw}
\end{figure}

In this proof we work with the graphs $H_{\cB,d}$ and $H_{\cW,d}$ for a fixed value of $d$. Hence, to simplify notation, we will write $H_\cB$ and $H_\cW$ from now on.

We define colourings $f_\cB \colon V_\cB \to \set{\cB,\cW,\cG}$ and $f_\cW \colon V_\cW \to \set{\cB,\cW,\cG}$ as follows. If $v = (a_1,a_2,\dotsc,a_i) \in V_C$ for some $C \in \set{\cB,\cW}$ and $i \ge 1$, and we have $a_i = (b_1,b_2,C')$, set $f_C(v) = C'$. If $v = \emptyset \in V_C$, set $f_C(v) = \cG$. Notice that for each solution $S \in \Pi(H_\cB,f_\cB)$ we have $S(\emptyset) = \cB$ and for each solution $S \in \Pi(H_\cW,f_\cW)$ we have $S(\emptyset) = \cW$.

Our port numbers are pairs $(a,C)$, where $a \in \set{0,1,\dotsc,d}$ and $C \in \set{\cB,\cW,\cG}$. Generalised port numberings $p_\cB$ and $p_\cW$ for $H_\cB$ and $H_\cW$, respectively, are defined as follows. Let $v = (a_1,a_2,\dotsc,a_i)$ and $u = (a_1,a_2,\dotsc,a_{i+1})$, where $a_{i+1} = (b_1,b_2,C)$, be nodes. Note that $f_\cB(u) = f_\cW(u) = C$. If $C \in \set{\cB,\cW}$, define
\begin{align*}
  p_\cB(v,(b_1,C)) &= p_\cW(v,(b_1,C)) = (u,(b_2,\cG)),\\
  p_\cB(u,(b_2,\cG)) &= p_\cW(u,(b_2,\cG)) = (v,(b_1,C)).
\end{align*}
If $C = \cG$, let $C' = f_\cB(v) = f_\cW(v)$ and define
\begin{align*}
  p_\cB(v,(b_1,\cG)) &= p_\cW(v,(b_1,\cG)) =  (u,(b_2,C')),\\
  p_\cB(u,(b_2,C')) &= p_\cW(u,(b_2,C')) =  (v,(b_1,\cG)).
\end{align*}

Next we will define induced subgraphs $\hat{H}_\cB$ and $\hat{H}_\cW$ of $H_\cB$ and $H_\cW$, respectively. For $C \in \set{\cB,\cW}$, the vertex set $\hat{V}_C$ of $\hat{H}_C$ consists of all vertices $(a_1,a_2,\dotsc,a_i) \in V_C$ such that $f_C((a_1,a_2,\dotsc,a_j)) \in \set{C,\cG}$ for all $j \in \set{0,1,\dotsc,i}$. That is, a node~$v$ of $H_C$ is in the subgraph~$\hat{H}_C$ if and only if each node in the unique path from the root node~$\emptyset$ to node~$v$ is either grey or of colour $C$. For each $v = (a_1,a_2,\dotsc,a_i) \in V_C$ we denote the corresponding node of $\hat{H}_C$ by $\hat{v} = (a_1,a_2,\dotsc,a_i) \in \hat{V}_C$.

For each $C \in \set{\cB,\cW}$, define a mapping $g_C \colon \hat{V}_C \to V_d$ as follows. Assume $\hat{v} = (a_1,a_2,\dotsc,a_i) \in \hat{V}_C$, where $a_j = (b_1^j,b_2^j,C_j)$ for each $j$. Now set $g_C(\hat{v}) = (a'_1,a'_2,\dotsc,a'_i)$, where $a'_j = (b_1^j,b_2^j)$ for each $j$. By observing that the subgraph~$\hat{H}_C$ is given by the rules (H1), (H2), (H4) and (H5) in the definition of $H_C$, and how they correspond to the rules (G1)--(G4) in the definition of $G_d$, one can see that $g_C$ is a bijection, and in fact an isomorphism, between $\hat{H}_C$ and $G_d$. We can use $g_C$ to move bisimilarity results from $G_d$ to $\hat{H}_C$, as the following lemma shows.

\begin{lemma}\label{lem:bsmgtosubh}
  Let $C \in \set{\cB,\cW}$, $r \in \N$ and $\hat{v},\hat{u} \in \hat{V}_C$. If $g_C(\hat{v}) \nbsmsv{r} g_C(\hat{u})$ and $f_C(\hat{v}) = f_C(\hat{u})$, then $\hat{v} \nbsmsv{r} \hat{u}$.
\end{lemma}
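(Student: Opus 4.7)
My plan is to prove the lemma by induction on $r$, exploiting that $g_C$ is a graph isomorphism $\hat{H}_C \cong G_d$ together with a concrete relationship between the generalised port numberings $p_C$ (restricted to $\hat{H}_C$) and $p_d$.

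First I would establish two auxiliary observations about the structure of $\hat{H}_C$. (A) By tracking which of the rules (H1)--(H6) actually produce nodes of $\hat{V}_C$---namely (H1), the $C$-coloured portion of (H2), (H4), and (H5)---one sees that in $\hat{V}_C$, nodes at even depth are coloured $\cG$ and nodes at odd depth are coloured $C$; in particular $f_C(\hat{z})$ depends only on the parity of the depth of $\hat{z}$. Consequently, if $f_C(\hat{v}) = f_C(\hat{u})$ and $\hat{w}$ is any neighbour of $\hat{v}$ and $\hat{w}'$ any neighbour of $\hat{u}$ in $\hat{H}_C$, then $\hat{w}$ and $\hat{w}'$ sit at the depth parity opposite to that of $\hat{v}$ and $\hat{u}$, and hence $f_C(\hat{w}) = f_C(\hat{w}')$ automatically. (B) Unpacking the two cases in the definition of $p_C$, the outgoing port label from any node $\hat{z} \in \hat{V}_C$ to a neighbour $\hat{z}'$ has the form $(\alpha, f_C(\hat{z}'))$, where $\alpha \in \set{0,1,\dotsc,d}$ is precisely the outgoing $G_d$-port label from $g_C(\hat{z})$ to $g_C(\hat{z}')$; this compatibility holds because the numeric components $c_1^j, c_2^j$ in (H4) and (H5) are computed by the same formulas as the components in (G3) and (G4).

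With these observations available, the induction is essentially routine. The base case $r=0$ follows because the isomorphism $g_C$ preserves degrees, and $f_C(\hat{v}) = f_C(\hat{u})$ is given. For the inductive step at $r = s+1$: given a neighbour $\hat{w}$ of $\hat{v}$ in $\hat{H}_C$, the isomorphism yields $\set{g_C(\hat{v}), g_C(\hat{w})} \in E_d$, and condition (B2) applied to $g_C(\hat{v}) \nbsmsv{s+1} g_C(\hat{u})$ produces a neighbour $w' \in V_d$ of $g_C(\hat{u})$ with $g_C(\hat{w}) \nbsmsv{s} w'$ and equal outgoing port numbers $\alpha$ in $G_d$. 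Setting $\hat{w}' = g_C^{-1}(w')$, observation (A) gives $f_C(\hat{w}) = f_C(\hat{w}')$, so the inductive hypothesis yields $\hat{w} \nbsmsv{s} \hat{w}'$ in $\hat{H}_C$. Observation (B) combined with $f_C(\hat{v}) = f_C(\hat{u})$ shows that the outgoing port labels from $\hat{w}$ to $\hat{v}$ and from $\hat{w}'$ to $\hat{u}$ are both equal to $(\alpha, f_C(\hat{v}))$, so condition (B2) in $\hat{H}_C$ is satisfied. Condition (B3) is symmetric.

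The main piece of bookkeeping is establishing observation (B): one has to handle separately the cases $f_C(\hat{z}') = \cG$ and $f_C(\hat{z}') \in \set{\cB,\cW}$ of the definition of $p_C$, and check the symmetric form of the port labels from both endpoints of an edge. Once this is done, the lemma is morally just a transfer of bisimilarity along the isomorphism $g_C$ combined with a change-of-port-numbering argument in the spirit of Lemma~\ref{lem:pnchangebsm}.
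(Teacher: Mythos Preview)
Your proposal is correct and follows the same inductive approach the paper sketches; the paper's own proof is just the one-line remark that ``given the inductive hypothesis and conditions (B1)--(B3) \dots\ it is quite straightforward to check that the conditions also hold for $\hat{v}$ and $\hat{u}$,'' and your observations (A) and (B) are exactly the bookkeeping that makes this straightforwardness explicit.
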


\begin{proof}
  The proof is by induction on $r$. Given the inductive hypothesis and conditions (B1)--(B3) of Definition~\ref{def:nbisimsv} for $g_C(\hat{v})$ and $g_C(\hat{u})$, it is quite straightforward to check that the conditions also hold for $\hat{v}$ and $\hat{u}$.
\end{proof}

Next, we will define a partial mapping $f_{v,u} \colon V_C \to V_C$ for each pair of grey nodes $\hat{v}$ and $\hat{u}$ in $\hat{H}_C$. Assume that $v = (a_1,a_2,\dotsc,a_i)$ and $u = (b_1,b_2,\dotsc,b_j)$. If $v' = (a_1,a_2,\dotsc,a_i,c_1,c_2,\dotsc,c_{i'}) \in V_C$ for some $c_1,c_2,\dotsc,c_{i'}$, and we have
\[
  f_C((a_1,a_2,\dotsc,a_i,c_1)) = \overline{C} \quad\text{and}\quad u' = (b_1,b_2,\dotsc,b_j,c_1,c_2,\dotsc,c_{i'}) \in V_C,
\]
then we define $f_{v,u}(v') = u'$. The idea here is that the subtrees of $H_C$ that have the nodes $v$ and $u$ as their roots and that are not contained in the subgraph~$\hat{H}_C$ (except for the root nodes) are isomorphic (up to a certain distance). The mapping $f_{v,u}$ is a partial isomorphism between such subtrees, as one can quite easily check. In what follows, we will use $f_{v,u}$ to show that the $r$-$\aSV$-bisimilarity of the nodes $((1,0,C))$ and $((2,1,C))$ in $\hat{H}_C$ can be extended to the supergraph~$H_C$.

For each $C \in \set{\cB,\cW}$, denote the nodes $\emptyset$, $((1,0,C))$ and $((2,1,C))$ of $H_C$ by $v_C$, $u_C$ and $w_C$, respectively. In accordance with our previously introduced notation, denote the corresponding nodes of the subgraph~$\hat{H}_C$ by $\hat{v}_C$, $\hat{u}_C$ and $\hat{w}_C$.

\begin{lemma}\label{lem:subtrees}
  Let $\hat{v},\hat{u} \in \hat{V}_C$ be grey nodes and let $t \in \N$ be such that $v \nbsmsv{t} u$. If $w \in \dom(f_{v,u})$, $\dist(w,v_C) < 2d-t$ and $\dist(f_{v,u}(w),v_C) < 2d-t$, then $w \nbsmsv{t} f_{v,u}(w)$.
\end{lemma}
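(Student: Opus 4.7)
The plan is to proceed by induction on $t$. Write $w=v\cdot(c_1,\dots,c_{i'})$, so that $w^\ast:=f_{v,u}(w)=u\cdot(c_1,\dots,c_{i'})$. For the base case $t=0$ I need equal degrees and colours of $w$ and $w^\ast$. Both $\hat v$ and $\hat u$ are grey, and grey nodes of $H_C$ appear only at even depth (the root is grey, and H4 is the sole rule producing grey children, from odd-index parents); hence $w$ and $w^\ast$ lie at depths of equal parity. The degree rule of $H_C$ therefore assigns them the same degree, provided neither is a leaf, which is guaranteed by $\dist(\cdot,v_C)<2d-t$. Their colours coincide because $f_C$ reads off the third coordinate of the shared last entry $c_{i'}$.

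For the inductive step $t\to t+1$, assume $v\nbsmsv{t+1}u$ together with the distance bounds for $t+1$, and verify (B2),(B3) for $w$ and $w^\ast$. Given a neighbour $x$ of $w$, I proceed in three cases. (i) If $x$ is a child of $w$, set $x^\ast:=f_{v,u}(x)$; the distance bound ensures $x^\ast$ exists, and since $f_{v,u}$ preserves types while the colours of $w,w^\ast$ agree, the $p_C$-rule assigns the same outgoing port number to the edges $x\to w$ and $x^\ast\to w^\ast$. The IH (using $v\nbsmsv{t}u$, which follows from $v\nbsmsv{t+1}u$, with distances slackened by one step) yields $x\nbsmsv{t}x^\ast$. (ii) If $x$ is the parent of $w$ and $i'\ge 2$, then still $x\in\dom(f_{v,u})$ and the same argument works. (iii) If $i'=1$, then $x=v$; pair it with $x^\ast:=u$, the parent of $w^\ast$. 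Writing $c_1=(b_1,b_2,\overline{C})$, the grey-to-coloured rule of $p_C$ assigns outgoing port $(b_2,\cG)$ both to $w\to v$ and to $w^\ast\to u$, and $v\nbsmsv{t}u$ follows immediately from the hypothesis. Condition (B3) is entirely symmetric.

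I expect the main obstacle to be case (iii): it is the only step where the induction leaves the isomorphic pair of $\overline{C}$-rooted subtrees and must invoke the external bisimilarity $v\nbsmsv{t}u$, so one has to check that the colour-sensitive port numbering $p_C$ still produces matching outgoing labels at this crossing. The check works out because both $v$ and $u$ are grey, so the grey-to-coloured-child rule of $p_C$ applies identically on the two sides. The distance bound $\dist(\cdot,v_C)<2d-t$ is precisely what is needed so that every relevant neighbour actually exists in $H_C$---preventing a missing neighbour on one side from being mistaken for a genuine bisimulation obstacle---and so that the inductive hypothesis remains applicable after one descending step.
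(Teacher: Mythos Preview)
Your argument is correct and follows the same inductive strategy as the paper's proof. The paper streamlines the case analysis by implicitly extending $f_{v,u}$ so that $f_{v,u}(v)=u$ (handled via the one-line remark ``If $w=v$, then $f_{v,u}(w)=u$ and we have nothing to prove''), which lets it treat all neighbours of $w$ uniformly; your explicit split into cases (i)--(iii) achieves the same thing and makes the boundary case $i'=1$ more transparent.

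One small slip in case~(iii): the bisimilarity condition~(B2) for the pair $(w,w^\ast)$ requires the outgoing port \emph{from the neighbour towards the centre}, i.e.\ from $x=v$ to $w$ and from $x^\ast=u$ to $w^\ast$, which is $(b_1,\overline{C})$ on both sides---not the port $(b_2,\cG)$ from $w$ to $v$ that you wrote down. The conclusion is unaffected, since both directions happen to match, but the check as written is of the wrong edge orientation.
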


\begin{proof}
  We proceed by induction on $t$. The base case $t = 0$ is straightforward: Since $\dist(w,v^C) < 2d$ and $\dist(f_{v,u}(w),v^C) < 2d$, we have $\deg(w) = \deg(f_{v,u}(w))$. Additionally, observe that we have $f_C(w) = f_C(f_{v,u}(w))$. It follows that we have $w \nbsmsv{0} f_{v,u}(w)$.

  For the inductive case, assume that the claim holds for $t = s$ and that $v \nbsmsv{s+1} u$. If $w = v$, then $f_{v,u}(w) = u$ and we have nothing to prove. Hence, assume $w \ne v$. Denote the neighbours of $w$ by $w_1,w_2,\dotsc,w_k$. Then the neighbours of $f_{v,u}(w)$ are $f_{v,u}(w_i)$, $i = 1,2,\dotsc,k$. We have $w_i \in \dom(f_{v,u})$ for all $i$. Additionally, since $\dist(w,v_C) < 2d-(s+1)$ and $\dist(f_{v,u}(w),v_C) < 2d-(s+1)$, we have $\dist(w_i,v_C) < 2d-s$ and $\dist(f_{v,u}(w_i),v_C) < 2d-s$ for all $i$. Now the inductive hypothesis implies that $w \nbsmsv{s} f_{v,u}(w)$ and $w_i \nbsmsv{s} f_{v,u}(w_i)$ for all $i$. Additionally, it follows immediately from the definition of $f_{v,u}$ that we have $\pi_C(w_i,w) = \pi_C(f_{v,u}(w_i),f_{v,u}(w))$ for all $i$. Now by Definition~\ref{def:nbisimsv} we have $w \nbsmsv{s+1} f_{v,u}(w)$. Hence the claim holds for $t = s+1$.
\end{proof}

\begin{lemma}\label{lem:bsmsubhtoh}
  Let $t \in \N$ and let $\hat{v},\hat{u} \in \hat{V}_C$ be such that $\dist(\hat{v},\hat{v}_C) < 2d-t$ and $\dist(\hat{u},\hat{v}_C) < 2d-t$. If $\hat{v} \nbsmsv{t} \hat{u}$, then $v \nbsmsv{t} u$.
\end{lemma}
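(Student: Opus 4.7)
The plan is to prove Lemma~\ref{lem:bsmsubhtoh} by induction on $t$, exploiting the fact that $\hat{H}_C$ sits inside $H_C$ with only ``extra'' $\overline{C}$-coloured subtrees attached at the root $v_C$ and at each interior grey node. For the base case $t=0$, the colour condition $f_C(v)=f_C(u)$ is immediate from $\hat{v}\nbsmsv{0}\hat{u}$, and $\deg_{H_C}(v)=\deg_{H_C}(u)$ reduces to a short case analysis: non-grey nodes in $\hat{H}_C$ have no extra edges in $H_C$, whereas grey non-root nodes gain exactly $d-1$ extra neighbours from rule~(H6) and the root gains $d-1$ extra neighbours from rule~(H3). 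Since the grey/non-grey distinction is detected by $f_C$ and the deepest-level vs.\ interior distinction is detected by $\deg_{\hat{H}_C}$, matching colour and $\hat{H}_C$-degree force equal $H_C$-degree.

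For the inductive step I would assume $\hat{v}\nbsmsv{s+1}\hat{u}$ with the stated distance conditions and verify $v\nbsmsv{s+1}u$. Condition~(B1) follows from the base case. Since $\hat{v}\nbsmsv{s}\hat{u}$ also holds and the distance bound for $s$ is weaker, the inductive hypothesis first yields $v\nbsmsv{s}u$ in $H_C$; this intermediate bisimilarity is the crucial stepping stone that feeds into Lemma~\ref{lem:subtrees}. To verify~(B2), let $w$ be a neighbour of $v$ in $H_C$. If $\{v,w\}\in\hat{E}_C$, then $\hat{v}\nbsmsv{s+1}\hat{u}$ yields a matching neighbour $\hat{w}'$ of $\hat{u}$ in $\hat{H}_C$ with $\hat{w}\nbsmsv{s}\hat{w}'$ and matching outgoing port numbers; the distance bound $\dist(\hat{w},\hat{v}_C)\le\dist(\hat{v},\hat{v}_C)+1<2d-s$ (and the analogous bound for $\hat{w}'$) lets the inductive hypothesis promote this to $w\nbsmsv{s}w'$ in $H_C$, while the definition of $p_C$ ensures that port numbers on edges of $\hat{H}_C$ agree with the corresponding edges in $H_C$. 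Otherwise $\{v,w\}\notin\hat{E}_C$, so $w$ lies in an ``extra'' $\overline{C}$-subtree; this forces $v$, and hence by colour matching $u$, to be grey, so $w\in\dom(f_{v,u})$. Lemma~\ref{lem:subtrees} applied with $t=s$---invoking $v\nbsmsv{s}u$ together with $\dist(w,v_C),\dist(f_{v,u}(w),v_C)<2d-s$---then gives $w\nbsmsv{s}f_{v,u}(w)$, with outgoing port numbers matching by construction of $f_{v,u}$. Condition~(B3) is handled symmetrically.

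The main obstacle will be orchestrating the two interlocking induction machineries---this lemma and Lemma~\ref{lem:subtrees}---at matching depths while keeping the distance windows tight. Some care is also needed in the edge case $v=v_C$, where $f_{v,u}$ pairs H3-children of the root with H6-children of a non-root grey $u$; this should still go through because H3 and H6 both produce $d-1$ children whose generalised port-number pairs form the set $\set{(2,1),(3,2),\dotsc,(d,d-1)}$, so $f_{v,u}$ remains a well-defined partial isomorphism onto a matching subtree.
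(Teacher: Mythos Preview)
Your proposal is correct and follows essentially the same approach as the paper: induction on $t$, lifting the $\hat{H}_C$-bisimilarity along $\hat{E}_C$-edges via the inductive hypothesis (using the relaxed distance bound for level $s$), and handling the extra $\overline{C}$-coloured neighbours at grey nodes via Lemma~\ref{lem:subtrees} after first securing $v \nbsmsv{s} u$ in $H_C$. Your base-case degree analysis and your remark on the $v=v_C$ edge case are a bit more explicit than the paper's treatment, but the argument is the same.
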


\begin{proof}
  We prove the claim by induction on $t$. The base case $t = 0$ is easy: If $\hat{v} \nbsmsv{0} \hat{u}$, then $f_C(\hat{v}) = f_C(\hat{u})$, and thus $f_C(v) = f_C(u)$. As $v$ and $u$ are of the same colour and neither of them is a leaf node, $\deg(v) = \deg(u)$. Hence $v \nbsmsv{0} u$.

  For the inductive step, assume that the claim holds for $t = s$ and that $\hat{v} \nbsmsv{s+1} \hat{u}$, where $\dist(\hat{v},\hat{v}_C) < 2d-(s+1)$ and $\dist(\hat{u},\hat{v}_C) < 2d-(s+1)$. Denote the neighbours of $\hat{v}$ and $\hat{u}$ by $\hat{v}_1,\hat{v}_2,\dotsc,\hat{v}_d$ and $\hat{u}_1,\hat{u}_2,\dotsc,\hat{u}_d$, respectively. We have $\hat{v} \nbsmsv{s} \hat{u}$, and by definition, for each $\hat{v}_i$ there is $\hat{u}_{j_i}$ such that $\hat{v}_i \nbsmsv{s} \hat{u}_{j_i}$ and $\pi_C(\hat{v}_i,\hat{v}) = \pi_C(\hat{u}_{j_i},\hat{u})$, and vice versa. We have $\dist(\hat{v}_i,\hat{v}_C) < 2d-s$ and $\dist(\hat{u}_i,\hat{v}_C) < 2d-s$ for all $i$. Now the inductive hypothesis implies that $v \nbsmsv{s} u$, $v_i \nbsmsv{s} u_{j_i}$ for all $i$ and $v_{i_j} \nbsmsv{s} u_j$ for all~$j$.

  Since $v \nbsmsv{s} u$, nodes $v$ and $u$ are of the same colour. If they are of colour $C$, they do not have neighbours other than $v_1,v_2,\dotsc,v_d$ and $u_1,u_2,\dotsc,u_d$, respectively. Then it follows from the definition that $v \nbsmsv{s+1} u$. Otherwise, $v$ and $u$ are grey, and in addition to $v_i$ and $u_i$, $i = 1,2,\dotsc,d$, they have neighbours generated by rule~(H3) or rule~(H6). Denote those neighbours by $v'_1,v'_2,\dotsc,v'_{d-1}$ and $u'_1,u'_2,\dotsc,u'_{d-1}$, respectively, such that we have $f_{v,u}(v'_i) = u'_i$ for all $i$. Observe that $\dist(v'_i,v_C) < 2d-s$ and $\dist(u'_i,v_C) < 2d-s$ for all $i$. Now Lemma~\ref{lem:subtrees} shows that $v'_i \nbsmsv{s} u'_i$ for all $i$. In addition, the definition of $f_{v,u}$ implies that $\pi_C(v'_i,v) = \pi_C(u'_i,u)$ for all $i$. We have shown that conditions (B2) and (B3) hold also for the additional neighbours, and consequently $v \nbsmsv{s+1} u$. Hence the claim is true for $t = s+1$.
\end{proof}

Now we can combine our previous results to obtain bisimilarity between certain nodes in the graph~$H_C$ for each $C \in \set{\cB,\cW}$. Lemma~\ref{lem:pswtobisim} shows that $((1,0)) \nbsmsv{2d-3} ((2,1))$, where $((1,0))$ and $((2,1))$ are nodes in the graph~$G_d$. Observe that $g_C(\hat{u}_C) = ((1,0))$ and $g_C(\hat{w}_C) = ((2,1))$. Now Lemma~\ref{lem:bsmgtosubh} implies that $\hat{u}_C \nbsmsv{2d-3} \hat{w}_C$. We have $\dist(\hat{u}_C,\hat{v}_C) = 1 < 2d-(2d-3)$ and $\dist(\hat{w}_C,\hat{v}_C) = 1 < 2d-(2d-3)$. Hence it follows from Lemma~\ref{lem:bsmsubhtoh} that $u_C \nbsmsv{2d-3} w_C$, where $u_C$ and $w_C$ are neighbours of $v_C$ in the graph~$H_C$.

As in the proof of Theorem~\ref{thm:simulation}, we define a port numbering $p'_C$ for each $C \in \set{\cB,\cW}$ based on the generalised port numbering~$p_C$. Again, we need to preserve bisimilarity as well as have identical outgoing port numbers from nodes $u_C$ and $w_C$ towards node~$v_C$. Define function~$f$ from the set of all generalised ports of $H_C$ to $[2d-1]$ as follows: $f(1,\cB) = f(1,\cW) = 1$, $f(i,\cB) = 2i-1$ and $f(i,\cW) = 2i-2$ for all $i = 2,3,\dotsc,d$, $f(0,\cG) = 1$ and $f(i,\cG) = i$ for all $i = 1,2,\dotsc,d$. Then, if $p_C(v,a) = (u,b)$ for some nodes $v,u$ and port numbers $a,b$, set $p'_C(v,f(a)) = (u,f(b))$. Without too much effort, one can check that $p'_C$ is indeed a valid port numbering of $H_C$, and that we have $\pi'_C(u_C,v_C) = 1 = \pi'_C(w_C,v_C)$. Lemma~\ref{lem:pnchangebsm} implies that $(H_C,f_C,u_C,p'_C) \nbsmsv{2d-3} (H_C,f_C,w_C,p'_C)$.

To reach our ultimate goal, we need to define one more mapping. Define $h \colon V_\cB \to V_\cW$ as follows: if $v = (a_1,a_2,\dotsc,a_i) \in V_\cB$, where $i \ge 1$ and $a_1 = (b_1,b_2,C)$ for some $b_1 \ge 2$, set $h(v) = u$, where $u = (a_1,a_2,\dotsc,a_i) \in V_\cW$. Additionally, set $h(v_\cB) = v_\cW$. Thus, there is one subtree starting from a child of $v_\cB$, the one having the node~$u_\cB = ((1,0,\cB))$ as its root, that is excluded from the domain of $h$. Similarly, the subtree having $u_\cW = ((1,0,\cW))$ as its root is excluded from the range of $h$. See Figure~\ref{fig:bisim} for an illustration of the situation.
\begin{figure}
  \centering
  \includegraphics[page=7]{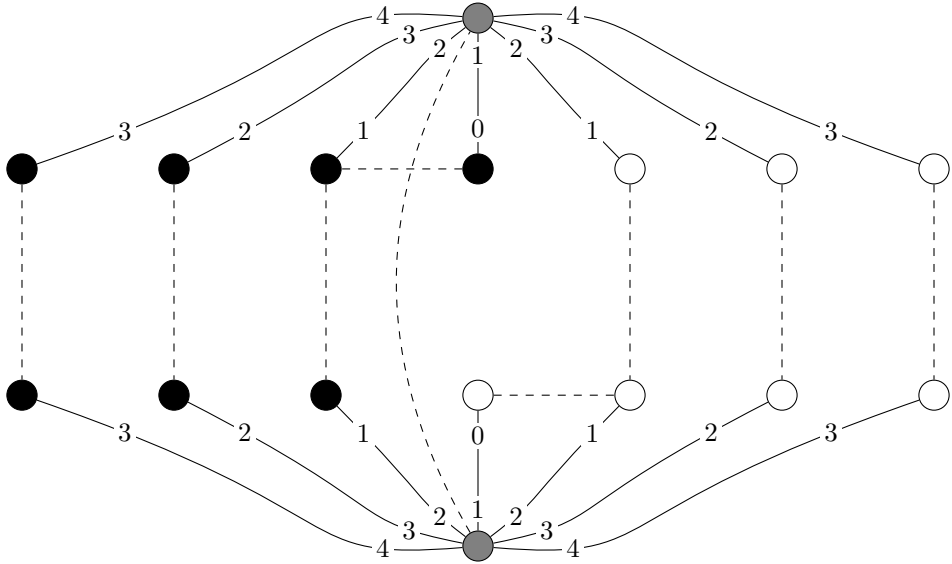}
  \caption{Graphs $H_{\cB,4}$ and $H_{\cW,4}$ up to distance one from the root nodes. The dashed lines represent $r$-$\aSV$-bisimilarity between nodes.}
  \label{fig:bisim}
\end{figure}

\begin{lemma}\label{lem:bsmhbhw}
  Let $v \in V_\cB$ and $u \in V_\cW$ be nodes such that $h(v) = u$. Then for all $t = 0,1,\dotsc,2d-2$ we have $(H_\cB,f_\cB,v,p'_\cB) \nbsmsv{t} (H_\cW,f_\cW,u,p'_\cW)$.
\end{lemma}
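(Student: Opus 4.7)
The plan is to prove the lemma by induction on $t$. The key observation is that $h$ is almost a port-preserving isomorphism from $H_\cB$ to $H_\cW$: outside the first-level subtree rooted at $u_\cB$ (respectively $u_\cW$), $h$ acts as the identity on tuples and preserves all relevant structure. The failure of $h$ on these two exceptional subtrees is repaired by combining the induction with the previously-derived bisimilarities $u_\cB \nbsmsv{2d-3} w_\cB$ in $H_\cB$ and $u_\cW \nbsmsv{2d-3} w_\cW$ in $H_\cW$, together with transitivity (Lemma~\ref{lem:nbsmeqrel}).

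The base case $t = 0$ is immediate, since $h$ preserves both colour labels and degrees: tuples are preserved outside the excluded branches, and $v_\cB$, $v_\cW$ are both grey nodes of degree $2d-1$. For the inductive step, assume the claim for $t = s$ with $s \leq 2d-3$, and consider $v \in V_\cB$ with $u = h(v)$. If $v \neq v_\cB$, then every neighbour $w$ of $v$ lies in $\dom(h)$, and direct inspection of rules (H4)--(H6) together with the renumbering $f$ shows that $\pi'_\cB(w,v) = \pi'_\cW(h(w),u)$; the inductive hypothesis then gives $w \nbsmsv{s} h(w)$, verifying both (B2) and (B3).

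The substantive case is $v = v_\cB$. For each outgoing port $k \in \set{2,\dotsc,d-1}$, the neighbourhoods of $v_\cB$ and of $v_\cW$ agree perfectly via $h$, and the inductive hypothesis finishes those pairs just as in the previous paragraph. The delicate data are the three ``port-$1$'' neighbours on each side: on the $H_\cB$ side, the black nodes $u_\cB = ((1,0,\cB))$ and $w_\cB = ((2,1,\cB))$ together with the exceptional white $((2,1,\cW))$; on the $H_\cW$ side, symmetrically, $u_\cW$, $w_\cW$ and the exceptional black $((2,1,\cB)) \in V_\cW$. I would match $u_\cB$ with the exceptional $((2,1,\cB)) \in V_\cW$, pair $w_\cB$ with its $h$-image, and match $((2,1,\cW)) \in V_\cB$ with $u_\cW$; every matched pair has outgoing port $1$ and matching colour. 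The bisimilarity of, for instance, $u_\cB$ and $((2,1,\cB)) \in V_\cW$ is then established by the transitive chain
\[
  u_\cB \nbsmsv{s} w_\cB \nbsmsv{s} h(w_\cB) = ((2,1,\cB)) \in V_\cW,
\]
whose first step uses $u_\cB \nbsmsv{2d-3} w_\cB$ together with $s \leq 2d-3$, and whose second step is the inductive hypothesis applied to $w_\cB \in \dom(h)$. The chain for $((2,1,\cW)) \in V_\cB$ and $u_\cW$ is symmetric, going through $w_\cW$ inside $H_\cW$. Condition (B3) is handled the same way with the roles of the two graphs swapped.

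The main obstacle will be the bookkeeping of this second case: spelling out the port-$1$ matching precisely, verifying port equality under $f$ for each pair, and checking that the transitive chains stay within the admissible range. The bound $t \leq 2d-2$ is tight for this approach, since the inductive step forces $s \leq 2d-3$ in order to invoke the internal bisimilarities $u_C \nbsmsv{2d-3} w_C$.
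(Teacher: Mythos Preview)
Your proposal is correct and follows essentially the same approach as the paper: induction on $t$, handling $v \neq v_\cB$ via $h$ directly, and at the root repairing the excluded neighbour $u_\cB$ (respectively $u_\cW$) by the transitive chain $u_\cB \nbsmsv{s} w_\cB \nbsmsv{s} h(w_\cB)$ using the pre-established $(2d-3)$-bisimilarity and Lemma~\ref{lem:nbsmeqrel}. Your explicit enumeration of the three port-$1$ neighbours on each side is a bit more detailed than the paper's version, which simply notes that every neighbour except $u_\cB$ lies in $\dom(h)$ and handles only that one exception, but the substance is identical.
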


\begin{proof}
  We prove the claim by induction on $t$. The base case $t = 0$ is trivial: if $h(v) = u$, then by definition of $h$ we have $\deg(v) = \deg(u)$ and $f_\cB(v) = f_\cW(u)$ and therefore $v \nbsmsv{0} u$.

  For the inductive step, suppose that the claim holds for $t = s < 2d-2$. Consider two arbitrary nodes $v \in V_\cB$ and $u \in V_\cW$ such that $h(v) = u$. By the inductive hypothesis we have $v \nbsmsv{s} u$. If $v \ne v_\cB$, all the neighbours of $v$ are in the domain of $h$ and all the neighbours of $u$ are in the range of $h$. Furthermore, if $w$ is a neighbour of $v$, we have $\pi'_\cB(w,v) = \pi'_\cW(h(w),u)$, and by the inductive hypothesis, $w \nbsmsv{s} h(w)$. Now Definition~\ref{def:nbisimsv} implies that $v \nbsmsv{s+1} u$.

  If $v = v_\cB$, $v$ has one neighbour that is not in $\dom(h)$. That neighbour is $u_\cB = ((1,0,\cB))$. Similarly, $h(v) = v_\cW$ has one neighbour that is not in the range of $h$, namely $u_\cW = ((1,0,\cW))$. However, as shown above, we have $u_\cB \nbsmsv{2d-3} w_\cB$, and thus $u_\cB \nbsmsv{s} w_\cB$. Since we have also $w_\cB \nbsmsv{s} h(w_\cB)$, Lemma~\ref{lem:nbsmeqrel} implies that $u_\cB \nbsmsv{s} h(w_\cB)$. Additionally, we have
\[
  \pi'_\cB(u_\cB,v) = \pi'_\cB(w_\cB,v) = \pi'_\cB(h(w_\cB),u).
\]
Similarly, we have $u_\cW \nbsmsv{s} w_\cW$ and $w_\cW \nbsmsv{s} h^{-1}(w_\cW)$, from which we get $u_\cW \nbsmsv{s} \allowbreak h^{-1}(w_\cW)$. Additionally,
\[
  \pi'_\cW(u_\cW,u) = \pi'_\cW(w_\cW,u) = \pi'_\cW(h^{-1}(w_\cW),v).
\]
We have shown that conditions (B1)--(B3) hold even if considering also neighbours not handled by the mapping~$h$, and consequently we have $v \nbsmsv{s+1} u$. Thus the claim holds for $t = s+1$.
\end{proof}

Let $d \ge 2$ and $\Delta = 2d-1$. Then $H_{\cB,d},H_{\cW,d} \in \F(\Delta)$. Let $\aA \in \aSV$ be any algorithm with a running time at most $\Delta-1 = 2d-2$. Consider the execution of $\aA$ in the nodes $v_\cB \in V_{\cB,d}$ and $v_\cW \in V_{\cW,d}$. Now Lemma~\ref{lem:bsmhbhw} together with Lemma~\ref{lem:nbsmsvstate} implies that $\aA$ produces the same output in $v_\cB$ and $v_\cW$. Recall that for any valid solutions $S \in \Pi(H_{\cB,d},f_\cB)$ and $S' \in \Pi(H_{\cW,d},f_\cW)$ we have $S(v_\cB) \ne S'(v_\cW)$. Hence $\aA$ does not solve the problem~$\Pi$. This concludes the proof of Theorem~\ref{thm:parity}.

\begin{remark}
  Note that we could define a similar problem without local inputs, by encoding the colours in the structure of the graph. One way to do this is to add one new neighbour to each black node and two new neighbours to each white node. If $d \ge 3$, this does not increase the maximum degree of the graph. Then we could define the set of solutions to consist of, for example, mappings~$S$ such that $S(v) = 1$ if node~$v$ has an odd number of neighbours of an odd degree and $S(v) = 0$ otherwise. However, for illustrative purposes, it was beneficial the use a colouring instead.
\end{remark}

\section{Conclusions}\label{sec:concl}

To sum up, we have shown that the simulation technique used to prove $\SV = \MV$ is optimal in the following sense: breaking symmetry between incoming messages is always possible in time $2\Delta-1$, and there are graphs where $2\Delta-1$ rounds are strictly required. Furthermore, we have constructed a graph problem for which the difference in running time between algorithms in $\sSV$ and $\sMV$ is linear in $\Delta$. This settles the last significant open question related to the hierarchy studied by \textcite{hella15weak-models}.

\section*{Acknowledgements}

This manuscript is based on the author's master's thesis~\cite{lempiainen14msc} submitted to the Department of Mathematics and Statistics of the University of Helsinki. The author would like to thank Jukka Suomela for guidance and feedback as well as Lauri Hella and Juha Kontinen for comments on the thesis. A minor part of the research was conducted while the author was employed at the Department of Computer Science of the University of Helsinki.

\printbibliography

\end{document}